\definecolor{light-gray}{gray}{0.95}
\theoremstyle{definition}
\newtheorem{definition}{Definition}[section]
\theoremstyle{plain}
\newtheorem{proposition}{Proposition}[section]
\newtheorem{lemma}{Lemma}[section]
\newtheorem{rrule}{Rule}[section]
\newcommand{\Oh}{\ensuremath{\mathcal{O}}}
\newcommand{\problemdef}[3]{
  \begin{center}
    \begin{minipage}{0.95\textwidth}
      \normalsize\textsc{#1} \smallskip \\
      \begin{tabularx}{\textwidth}{@{}l@{\hspace{3pt}}X}
        \normalsize\textbf{Input:}    & \normalsize#2 \\
        \normalsize\textbf{Task:} & \normalsize#3
      \end{tabularx}
    \end{minipage}
  \end{center}
}
\newcommand{\dist}{\text{dist}}
\begin{document}

\title{Efficient Branch-and-Bound Algorithms for Finding Triangle-Constrained 2-Clubs\footnote{Preliminary results and an implementation for the special case of~$\ell=1$ for \textsc{Vertex Triangle~$2$-Club} appeared in the second author's Bachelor thesis~\cite{Ke21}. }}

\author{Niels Grüttemeier %\orcidlink{0000-0002-6789-2918}
\and Philipp~Heinrich~Keßler
 \and Christian Komusiewicz %\orcidlink{0000-0003-0829-7032} 
  \and Frank Sommer\footnote{Supported by the Deutsche Forschungsgemeinschaft (DFG), project EAGR, {KO~3669/{6-1}}.} %\orcidlink{0000-0003-4034-525X}
 }
\date{%
     Fachbereich Mathematik und Informatik, Philipps-Universität Marburg, Germany\\[2ex]%
}

%\title{Efficient Branch-and-Bound Algorithms for Finding Triangle-Constrained 2-Clubs \thanks{Preliminary results and an implementation for the special case of~$\ell=1$ for \textsc{Vertex Triangle~$2$-Club} appeared in the second author's Bachelor thesis~\cite{Ke21}.\\
%Frank Sommer is supported by the Deutsche Forschungsgemeinschaft  (DFG), project EAGR (KO 3669/6-1). } }
%\titlerunning{Branch-and-Bound Algorithms for Finding 2-Clubs with Triangle Constraints}
%\authorrunning{N.~Grüttemeier, P. Keßler, C.~Komusiewicz, F.~Sommer}

%\author{Niels Grüttemeier \and  Philipp~Heinrich~Keßler \and Christian~Komusiewicz
%	\orcidID{0000-0003-0829-7032} 
%	\and Frank Sommer 
%	\thanks{Supported by the Deutsche Forschungsgemeinschaft  (DFG), project EAGR (KO 3669/6-1).}
%  \orcidID{0000-0003-4034-525X}
%}

%orcidID manuell einfügen oder nur bei submission angeben?

%\institute{Niels Grüttemeier \at 
%	Fachbereich Mathematik und Informatik, Philipps-Universität Marburg, Marburg, Germany \\ \email{niegru@informatik.uni-marburg.de}
%\and 
%Philipp~Heinrich~Keßler \at 
%	Fachbereich Mathematik und Informatik, Philipps-Universität Marburg, Marburg, Germany \\ \email{kesslerb@students.uni-marburg.de}
%\and 
%	Christian Komusiewicz \at
%	Fachbereich Mathematik und Informatik, Philipps-Universität Marburg, Marburg, Germany \\ \email{komusiewicz@informatik.uni-marburg.de}
%\and 
%	Frank Sommer \at Fachbereich Mathematik und Informatik, Philipps-Universität Marburg, Marburg, Germany \\ \email{fsommer@informatik.uni-marburg.de}}

%\date{Received: date / Accepted: date}

\maketitle

\begin{abstract}
In the \textsc{Vertex Triangle 2-Club} problem, we are given an undirected graph~$G$ and aim to find a maximum-vertex subgraph of~$G$ that has diameter at most~2 and in which every vertex is contained in at least~$\ell$~triangles in the subgraph. 
  So far, the only algorithm for solving \textsc{Vertex Triangle 2-Club} relies on an ILP formulation [Almeida and Br{\'{a}}s, Comput. Oper. Res. 2019]. 
  In this work, we develop a combinatorial branch-and-bound algorithm that, coupled with a set of data reduction rules, outperforms the existing implementation and is able to find optimal solutions on sparse real-world graphs with more than 100\,000 vertices in a few minutes. 
  We also extend our algorithm to the \textsc{Edge Triangle 2-Club} problem where the triangle constraint is imposed on all edges of the subgraph.
\end{abstract}

\section{Introduction}
One of the characteristic features of communities in social networks is that they have very low diameter: each pair of vertices in a community is connected via a very small number of hops. This observation motivated the proposal of diameter-based community models; the arguably most famous one being the $s$-club model of Mokken~\cite{Mok79}. In this model, a vertex set~$S$ in a graph~$G$ is considered to be a community if~$G[S]$, the subgraph of~$G$ induced by~$S$, has low diameter. Informally, this means that every pair of vertices in~$S$ is connected via few hops that visit only vertices of~$S$. The corresponding optimization problem of finding a largest community under this model is the NP-hard~\textsc{$s$-Club} problem~\cite{BLP02,BBT05}. For~$s=1$, this problem is the same as \textsc{Maximum Clique}.

A particularly important case is~$s=2$, where we search for a set that induces a subgraph of diameter at most 2. While it has been shown that~\textsc{$s$-Club} can be solved efficiently by ILP formulations~\cite{BBT05,PBH16,SB20} and by combinatorial branch-and-bound algorithms~\cite{BBT05,CHLS13,HKN15,KNNP19}, these studies showed an undesirable behavior of maximum 2-clubs: in most instances, the largest 2-club consists of the the vertex of maximum degree and its neighbors. As a consequence, maximum 2-clubs fare poorly when it comes to other typical properties of communities such as having high density or being robust against vertex or edge failures. 

To overcome these drawbacks of the 2-club (or more generally, the $s$-club) model, many augmented variants of 2-clubs have been proposed~\cite{AB19,CA17,KNNP19,PYB13,VB12,YPB17}. One such variant, the triangle 2-club model, asks that every vertex in the 2-club is part of a triangle~\cite{CA17}. This model was later extended to one that may ask for any fixed number of triangles. More precisely, a vertex set~$S\subseteq V$ is a \emph{vertex-$\ell$-triangle 2-club} in~$G$ if~$G[S]$ has diameter at most two and every vertex of~$S$ is in at least~$\ell$ triangles in~$G[S]$~\cite{AB19}. 

\problemdef{Vertex Triangle~$2$-Club} {An undirected graph~$G$ = $(V,E)$ and
  an integer~$\ell$.} {Find a maximum-cardinality vertex-$\ell$-triangle 2-club.}

One desirable effect of the~$\ell$-triangle 2-club property that is not guaranteed by other augmented 2-club models concerns the local clustering coefficients of the graph~$G[S]$. The local clustering coefficient of a vertex~$v$ in a graph~$G$ is the ratio of present edges in the neighborhood~$N(v)$ of~$v$ and~$\binom{\deg(v)}{2}$, the number of possible edges in~$N(v)$. Roughly speaking, the larger one sets~$\ell$, the better the minimum clustering coefficient in the solution is~\cite{AB19}. Other augmented 2-club models do not give this guarantee since their definitions are usually fulfilled by complete bipartite graphs~\cite{KNNP19,PYB13,VB12}.

A further related model,  introduced in companion work~\cite{GKS22}, imposes the triangle constraints on the \emph{edges} of the~$s$-club. 
More precisely, we say that a set~$S$ is an \emph{edge-$\ell$-triangle 2-club} when~$G$ has a spanning subgraph~$\widehat{G}=(S,\widehat{E})$ such that~$\widehat{G}$ has diameter at most~2 and every edge of~$\widehat{E}$ is in at least~$\ell$ triangles in~$\widehat{G}$.
\problemdef{Edge Triangle~$2$-Club}
{An undirected graph~$G$ = $(V,E)$ and an integer~$\ell$.}
{Find a maximum-cardinality edge-$\ell$-triangle 2-club.} 
% We remark that the edge-$\ell$-triangle property is a stronger restriction since

Observe that every set that fulfills the edge-$\ell$-triangle property also fulfills the vertex-$\ell$-triangle property while there exist vertex-$\ell$-triangle 2-clubs that do not satisfy the edge-$\ell$-triangle property~\cite{GKS22}. From a theoretical point of view, \textsc{Vertex Triangle~$2$-Club} and \textsc{Edge Triangle~$2$-Club} are NP-hard~\cite{AB19,CA17,GKS22} and, moreover, both problems are considerably harder than~\textsc{2-Club} with respect to their parameterized complexity for the parameter solution size~$|S|$~\cite{GKS22}. 

In this work, we study whether both problems can be solved efficiently in practice. For \textsc{Vertex Triangle 2-Club}, a first ILP-based implementation showed that it can be solved quickly on medium-size instances for all~$\ell \le 6$~\cite{AB19}. The ILP-based algorithms, which more generally solve \textsc{Vertex Triangle $s$-Club} for arbitrary values of~$s$, consist of three main features: 1) an efficient separation that computes valid inequalities in case a vertex set violates the triangle-property or the diameter property, 2) a data reduction rule that deletes vertices which are not in sufficiently many triangles, and 3) an efficiently computable lower bound, called the \emph{Neighborhood Lower Bound} in this work. The \emph{Neighborhood Lower Bound} is an adaption of the classic star heuristic for 2-club~\cite{AB19}.  The star heuristic finds the vertex with maximum degree as a valid 2-club solution which is in most cases the optimal solution. %This is so far the only computational study on the two problems.
%\subparagraph*{Related Work.}

\paragraph{Our Results.}
For brevity, we use the term triangle 2-clubs to simultaneously refer to  $\ell$-vertex and the $\ell$-edge triangle properties for all~$\ell$.

We show that, using a combinatorial branch-and-bound algorithm combined with problem-specific data reduction rules and lower bounds, allows us to solve large sparse instance of \textsc{Vertex Triangle~$2$-Club} with up to~$290\,000$ vertices and~$990\,000$~edges. 
Compared to the previous computational experiments, which were run only for~$\ell\le 6$, we show that the problem can be solved efficiently for~$\ell$ up to 100. 
Furthermore, we show that our implementation outperforms the previous ILP. Our implementation also solves the \textsc{Edge Triangle~$2$-Club}. 

The overall approach of the branch-and-bound algorithm is similar to previous ones for \textsc{2-Club}~\cite{HKN15} and robust models of \textsc{2-Club}~\cite{KNNP19}: First, perform data reduction and compute a lower bound, then consider the sufficiently large 2-neighborhoods of~$G$ one by one. 
Furthermore, on each 2-neighborhood, perform a branching that identifies a vertex that is in conflict with at least one other vertex, for example since their distance is too large. We also provide a running time bound for this algorithm in terms of~$n$, the number of vertices of the graph and~$n-|S|$, the number of vertices that are not part of triangle 2-club.

Naturally, the details of the branch-and-bound algorithm differ from previous ones when it comes to incorporating the triangle properties. To establish the triangle properties, we use rules that delete vertices or edges whenever they are not in sufficiently many triangles. These rules have been described previously~\cite{AB19,GKS22}. To speed up the data reduction, we add further simple-degree-based rules. In addition, we identify new data reduction rules that delete edges and vertices whose inclusion in a triangle 2-club would lead to a violation of the triangle properties. These rules particularly apply to the setting during branching when some vertices are already marked as being part of the sought triangle 2-club. 

We use two heuristics for computing lower bounds. The first one is the known \emph{Neighborhood Lower Bound}. This lower bound performs very well on many instances but not on all instances. In particular for intermediate values of~$\ell$, the lower bound quality decreases. This prompts us to develop a new heuristic, called \emph{Greedy Lower Bound} which considers the 2-neighborhoods, as the branching algorithm does. Instead of performing a branching it greedily deletes vertices that are in a conflict.  We show that the combination of the two lower bounds substantially outperforms the \emph{Neighborhood Lower Bound} in terms of solution quality. Since the running time of the heuristic is quite high it leads to substantial running time improvements only for \textsc{Edge Triangle~$2$-Club}. Besides the experimental evaluation we also analyze the theoretical worst-case running time of the lower bounds depending on the degeneracy, the maximum degree, and the total size of the input graph.

We then study how the parameter~$\ell$ influences central properties of the returned triangle 2-clubs, as  done in previous work~\cite{AB19}. In a nutshell, we confirm that increasing~$\ell$ usually leads to high density and high global and local clustering coefficients, also on larger networks, where the general tendency is that (triangle) 2-clubs are less dense. Since increasing~$\ell$ usually does not influence the running time negatively, this gives a good adjustable parameter for balancing the emphasis on the low-diameter property with other cohesiveness properties.

\paragraph{Preliminaries.}
\label{sec:prelim}
We consider undirected simple graphs~$G$ and let~$V(G)$ and $E(G)$ denote the vertex set
and edge set of~$G$, respectively. We denote~$n:=|V(G)|$ and~$m:=|E(G)|$. The
\emph{subgraph of~$G$ induced by a vertex set~$S$} is denoted by~$G[S]$.  For a vertex~$v$, we let~$N(v)$ denote the (open) neighborhood of~$v$ and~$N[v]:=N(v)\cup \{v\}$ the closed neighborhood of~$v$. Moreover, we let~$N_2[v]$ denote the vertex set that contains~$N[v]$ and all vertices that have at least one common neighbor with~$v$. The \emph{degree}
of a vertex in~$G$ is denoted by~$\deg(v):=|N(v)|$. The \emph{diameter} of a graph~$G$ is the smallest
number~$t$ such that every pair of vertices in~$G$ is connected by a path of length at
most~$t$.  We let~$\Delta(G)$ denote the maximum degree of a graph~$G$. Moreover,~$d(G)$
denotes the \emph{degeneracy} of~$G$, that is, the smallest number~$d$ such that every
induced subgraph of~$G$ contains a vertex of degree at most~$d$. If the graph~$G$ is clear
from context, we simply write~$\Delta$ and~$d$. Any~$d$-degenerate~$G=(V,E)$ admits an
ordering~$\sigma=(v_1,\ldots,v_n)$ of~$V$ such that~$v_i$ has at most~$d$ neighbors
in~$\{v_{i+1},\ldots,v_n\}$. The ordering and the list of neighbors of all~$v_i$
in~$\{v_{i+1},\ldots,v_n\}$~can be computed in linear time.

% A data reduction rule is an algorithm that transforms an instance~$x$ of a problem~$L$ into an instance~$x$ of~$L$.
% A reduction rule is \emph{correct} if~$x$ is a yes-instance of~$L$ if and only if~$x',k')$ is a yes-instance of~$L$.
% A branching rule is an algorithm that transforms an instance~$(x,k)$ of a parameterized problem~$L$ into $\ell$instances~$(x_1',k_1'),\ldots, (x_\ell',k_\ell')$ of~$L$.
% A branching rule is \emph{correct} if~$(x,k)$ is a yes-instance of~$L$ if and only if there exists an~$i\in[\ell]$ such that~$(x_i',k_i')$ is a yes-instance of~$L$.

\section{The Branching Algorithm}
\label{sec-bran}

Our algorithms which solve \textsc{Vertex Triangle 2-Club} and \textsc{Edge Triangle 2-Club} are based on the observation that the solution is contained in the 2-neighborhood
$N_2[v]$ of some vertex~$v$ of the input graph~\cite{SKMN12}. 
Consequently, we solve a given instance~$(G,\ell)$ by finding optimal solutions of all local instances~$(G_v,\ell)$, where~$G_v$ is the subgraph of~$G$ induced by~$N_2[v]$, and returning the solution of maximum size.
 %This approach has previously been used in experimental papers studying variants of~\textsc{2-Club}~\cite{HKN15,HKNS15}.

To solve the local instances~$(G_v,\ell)$, we use a branching algorithm that makes use of several reduction rules.

\subsection{Basic Reduction Rules and Marked Branching}
% We first describe a naive strategy to solve \textsc{Vertex Triangle 2-Club} and \textsc{Edge Triangle 2-Club}. The algorithm relies on a simple reduction rule and a branching strategy.

We first describe some reduction rules for establishing the triangle property.
Let~$(G,\ell)$ be an instance of \textsc{Vertex Triangle 2-Club} and let~$u$ be a vertex of~$G$. 
If~$u$ is contained in less than~$\ell$ triangles of~$G$, then~$u$ cannot be part of any vertex-$\ell$-triangle~2-club~$S$. 
Consequently,~$u$ can be safely deleted from the input graph. This deletion is also safe for \textsc{Edge Triangle 2-Club}, since a set that fulfills the edge-$\ell$-triangle property also fulfills the vertex-$\ell$-triangle property.
Moreover, for an instance~$(G,\ell)$  of~\textsc{Edge Triangle 2-Club} we can additionally delete some edges: Let~$e$ be an edge of~$G$ that is contained in less than~$\ell$ triangles. 
Furthermore, let~$S$ be an edge-$\ell$-triangle~$2$-club in~$G$ and let~$\widehat{G}=(S,\widehat{E})$ be a corresponding subgraph with diameter at most~$2$ where every edge of~$\widehat{E}$ is in at least~$\ell$ triangles in~$\widehat{G}$. 
Then,~$e \not \in \widehat{E}$ and therefore, the edge~$e$ can safely be deleted from the input instance.

Both observations described above lead to the following reduction rule.

\begin{rrule}[Low-Triangle Rule (LTR)]
\label{rr-remove-vertices-to-low}
\begin{enumerate}[label=$\alph*)$]
\item In case of \textsc{Vertex Triangle 2-Club}, delete all vertices that are in less than~$\ell$ triangles.
\item In case of \textsc{Edge Triangle 2-Club}, delete all edges that are in less than~$\ell$ triangles and delete isolated vertices.
\end{enumerate}
\end{rrule}
We now bound the running time of exhaustively applying this rule. To obtain a bound that explains why the rule runs relatively fast on sparse instances, we formulate it in terms of the number of edges~$m$ and the degeneracy~$d$ of the input graph. 
\begin{lemma}\label{lem:ltr-time}
  The LTR can be performed exhaustively in~$\Oh(m\cdot d)$ time. 
\end{lemma}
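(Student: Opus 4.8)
The plan is to avoid recomputing triangle counts from scratch after every deletion---which would cost $\Oh(m\cdot d)$ per round over up to $\Oh(n)$ rounds---and instead to perform a single amortised ``peeling'' sweep in the spirit of core-decomposition algorithms. The two ingredients are (i) a fast initial triangle enumeration and (ii) an update scheme in which each triangle is touched only a constant number of times over the whole execution.

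For (i), I would first compute a degeneracy ordering $\sigma=(v_1,\dots,v_n)$ in linear time and orient every edge from its earlier to its later endpoint, so that each vertex has at most~$d$ later neighbours. Enumerating, for every vertex~$v_i$, all adjacent pairs among its later neighbours lists every triangle exactly once, namely from its $\sigma$-earliest vertex. Writing $L(v_i)$ for the later-neighbour set, the number of pairs is $\sum_i \binom{|L(v_i)|}{2}\le \tfrac{d}{2}\sum_i |L(v_i)| = \tfrac{d}{2}\,m = \Oh(m\cdot d)$; with $\Oh(1)$ adjacency tests this both lists all triangles and certifies that there are only $\Oh(m\cdot d)$ of them. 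During this pass I build, for case~$a)$, a per-vertex triangle counter together with, for each vertex, the list of triangles containing it; for case~$b)$, a per-edge counter together with, for each edge, the list of triangles containing it.

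For (ii), consider case~$a)$: initialise a queue with all vertices whose counter is below~$\ell$. To delete a queued vertex~$v$, scan its triangle list; for each still-alive triangle $\{v,a,b\}$ (checked via a deletion flag in $\Oh(1)$ time), mark it dead and decrement the counters of~$a$ and~$b$, enqueuing either one whose counter drops below~$\ell$. Since a triangle is processed only at the first deletion of one of its three vertices and is skipped in $\Oh(1)$ time thereafter, the total update work is proportional to the number of triangle--vertex incidences, which is $3\cdot\Oh(m\cdot d)=\Oh(m\cdot d)$. Case~$b)$ is analogous: the queue holds edges lying in fewer than~$\ell$ triangles, deleting an edge marks each alive triangle through it as dead and decrements its two partner edges, and isolated vertices are removed by tracking endpoint degrees as edges vanish (an additional $\Oh(m)$ overhead). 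Summing enumeration and updates yields the claimed $\Oh(m\cdot d)$ bound.

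The step I expect to be the crux is the amortised accounting for the cascade: one must argue that although the rule may trigger a long chain of deletions, no triangle is ever re-examined after it has been destroyed, so that the entire peeling cost collapses onto the initial enumeration cost rather than multiplying by the number of rounds. A minor technical point worth verifying is that the adjacency tests in the enumeration stay in $\Oh(1)$ time---for instance by marking the current vertex's neighbourhood in an array---so that triangle listing indeed remains within $\Oh(m\cdot d)$ and does not incur an extra logarithmic or degree-dependent factor.
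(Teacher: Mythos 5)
Your proposal is correct and follows essentially the same route as the paper's proof: enumerate all triangles in $\Oh(m\cdot d)$ time (the paper cites Chiba and Nishizeki where you re-derive the bound via the degeneracy ordering), store per-vertex/per-edge triangle lists and counters, and then charge all counter updates during the deletion cascade to triangle--vertex (or triangle--edge) incidences, of which there are $\Oh(m\cdot d)$. Your explicit dead-triangle flags and queue make precise the amortised accounting that the paper's shorter argument leaves implicit, but the underlying algorithm and analysis are the same.
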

\begin{proof}
Using the triangle enumeration algorithm of Chiba and Nishizeki~\cite{CN85}, we can enumerate all triangles in $\Oh(m\cdot d)$~time.  While enumerating the triangles, we can compute for each vertex~$w$ and each edge~$e$ of~$G$ a list of all triangles containing~$w$ or~$e$ and their number.

After this prepossessing, a vertex or an edge to which the rule applies can be determined in $\Oh(1)$~time as long as the triangle counters for the vertices or edges are updated after each deletion. To update the triangle counter, we traverse the triangle list for the deleted vertex or edge and decrement the respective counter for all vertices or edges that are contained in the triangle.  
\end{proof}
To decrease the effort of counting triangles, we provide another rule that exploits the degree of vertices in the input graph to identify some vertices that are not contained in at least~$\ell$ triangles. Observe that a vertex~$v$ is contained in at most~$\binom{\deg(v)}{2}$ triangles, which is the case if~$N[v]$ is a clique. 
Therefore, a vertex~$v$ with~$\binom{\deg(v)}{2}< \ell$ is not contained in at least~$\ell$ triangles and we may thus delete vertices with degree at most~$ 1/2+\sqrt{1/4+2\ell}$ from the input graph. Furthermore, an edge~$e$ incident with a vertex~$v$ is contained in at most~$\deg(v)-1$ triangles. Thus, in case of \textsc{Edge Triangle 2-Club} we may delete vertices with degree at most~$\ell$.

\begin{rrule}[Low-Degree Rule (LDR)]
\label{rr-remove-vertices-to-low-degree}
\begin{enumerate}[label=$\alph*)$]
\item In case of \textsc{Vertex Triangle 2-Club}, delete all vertices that have degree at most~$1/2+\sqrt{1/4+2\ell}$.
\item In case of \textsc{Edge Triangle 2-Club}, delete all vertices that have degree at most~$\ell$.
\end{enumerate}
\end{rrule}
The LDR can be applied exhaustively in~$\Oh(m)$~time and thus it is faster than the LTR. 
Moreover, applying the LDR (Rule~\ref{rr-remove-vertices-to-low-degree}) may reduce the number of vertices and edges that need to be considered when applying the LTR (Rule~\ref{rr-remove-vertices-to-low}).
%In the following, we denote the LDR and the LTR together as the \emph{Basic Rules}.
Note that these two rules were already observed by Almeida and Brás~\cite{AB19} for \textsc{Vertex Triangle 2-Club}.
After applying these two rules exhaustively, the input graph satisfies the triangle conditions posed on~vertex-$\ell$-triangle 2-clubs or edge-$\ell$-triangle~$2$-clubs. 
However, the diameter might still be larger than~$2$. To handle this, we use a branching strategy which is based on the definition of incompatibility. 

\begin{definition}
\label{def-compatible}
Two vertices~$u$ and~$w$ in a graph~$G$ are called \emph{compatible} if and only if~$\dist_G(u,w)\le 2$.
Otherwise,~$u$ and~$w$ are called \emph{incompatible}.
\end{definition}

% \begin{algorithm}[t]
% \KwIn{An instance~$(G=(V,E),\ell)$.}
% \KwOut{A maximum-cardinality triangle 2-club.}

% Apply Reduction Rules~\ref{rr-remove-vertices-to-low-degree} and~\ref{rr-remove-vertices-to-low} on the input instance\\
% \If {all~$v \in V$ are pairwise compatible,}{
% return~$V$
% }
% \Else {
% Find two vertices $u$ and~$w$ that are incompatible\\
% \texttt{$S_1 \coloneqq$ NaiveBranching($G-u,\ell$)} \label{line-first-call}\\
% \texttt{$S_2 \coloneqq$ NaiveBranching($G-w,\ell$)}\\
% return~$\text{arg\,max}_{S \in \{S_1,S_2\}} |S|$
% }
% \caption{\texttt{NaiveBranching} Algorithm to solve \textsc{Vertex Triangle $2$-Club} or \textsc{Edge Triangle $2$-Club}, respectively.} \label{algo-basic-branching}
% \end{algorithm}

% Let~$G=(V,E)$ be an input graph where the triangle conditions are satisfied. If all vertices are pairwise compatible, one may return~$V$ as a solution. 
% Otherwise, if there is a pair of incompatible vertices~$u$ and~$w$, a solution cannot contain both vertices at the same time. 
% This gives rise to a simple search tree algorithm, where we branch into the cases where we either remove~$u$ or~$w$ from~$G$. 
% The pseudocode for this algorithm is given in Algorithm~\ref{algo-basic-branching}.

% \subsection{Marked Vertices}
Clearly, if all vertices are compatible, then the 2-club property is established and if there are two incompatible vertices, then one of the two must be deleted. This gives rise to a branching algorithm where we branch into two cases, deleting one vertex in each case. We formulate the branching in a slightly different manner by using a set~$M$ of \emph{marked vertices}, which was also done for other \textsc{2-Club} variants~\cite{HKN15,KNNP19}. Intuitively, the set~$M$ of marked vertices must belong to the solution that we aim to compute. % Instead of branching into the two cases where either~$u$ or~$w$ is removed from the input graph, we branch into the two cases where~$u$ is removed from the input graph or~$u$ is~marked.
Thus, in the presence of incompatible vertices~$u$ and~$w$, we may branch into one case where we assume that~$u$ is not part of the solution and one case where we assume that~$u$ is part of the solution. In the first case, we may delete~$u$; in the second case, we add~$u$ to~$M$. We may additionally delete~$w$ in the second case since~$u$ is now marked there is no 2-club that simultaneously contains~$u$ and~$w$. This observation is of course not only true for~$u$ and~$w$ but for any pair of incompatible vertices, where at least one of them is marked. This gives rise to the following two reduction rules.  
% Suppose we know that the first recursive call in the naive branching algorithm did not result in an optimal solution. Then, removing~$u$ from~$G$ is not optimal and therefore, we may assume that~$u$ is part of the solution and mark~$u$ accordingly. Formally, we extend the input of our problems to a graph~$G=(V,E)$, an integer~$\ell$, and a set~$M$ of marked vertices and aim to find a solution~$S$ with~$M \subseteq S$. We first describe how the concept of marked vertices can be used to improve the branching behind Algorithm~\ref{algo-basic-branching}. Afterwards, we provide further reduction rules that exploit  marked vertices.
% \paragraph{Branching with Marked Vertices.} The idea is analogous to the idea behind Algorithm~\ref{algo-basic-branching}: If the triangle constraints are satisfied and all vertices are pairwise compatible, one may return~$V$ as a solution. Otherwise, we find two incompatible vertices~$u$ and~$w$.
% We first present rules which guarantee that also in the second branch where~$u$ gets marked, the number of vertices is decreased by at least one. 
% Recall that the marked vertex~$u$ is incompatible with at least one other vertex~$w$. 
% More generally, consider vertices that are incompatible with marked vertices. 
% Since a marked vertex~$v$ must belong to a solution, no vertex~$x$ that is incompatible with~$v$ can be added to the solution. 
% If~$x$ is unmarked, it can be safely removed from the input graph. 
% Otherwise, if~$x$ is marked, there is no solution~$S$ containing both~$v$ and~$x$ and therefore, we may abort the branch. 

\begin{rrule}[Incompatible-Resolution Rule (IRR)]
\label{rr-remove-incompatible}
Delete all unmarked vertices that are incompatible to some marked vertex.
\end{rrule}

\begin{rrule}[Marked-Incompatible Rule (MIR)]
\label{rr-trivial-no}
If an instance contains two incompatible vertices that are both marked,
then return that there is no solution.
\end{rrule}
With these altogether four reduction rules and the marked branching, we can formulate our branching algorithm: apply the reduction rules exhaustively and perform the marked branching if there are at least two incompatible vertices. The pseudocode is given in Algorithm~\ref{algo-marked-branching}. The solution to the initial instance~$(G,\ell)$ where no vertex is marked can be found by calling the algorithm with~$M=\emptyset$.
% In Algorithm~\ref{algo-marked-branching} we apply Reduction Rules~\ref{rr-remove-vertices-to-low-degree}, \ref{rr-remove-vertices-to-low}, \ref{rr-remove-incompatible},~\ref{rr-trivial-no},~\ref{rr-remove-vertices-too-small-2-neighborhood}, and~\ref{rr-remove-vertices-not-enough-compatible} as follows:
% In one iteration we apply Reduction Rules~\ref{rr-remove-vertices-to-low-degree}, \ref{rr-remove-vertices-to-low}, \ref{rr-remove-incompatible},~\ref{rr-trivial-no},~\ref{rr-remove-vertices-too-small-2-neighborhood}, and~\ref{rr-remove-vertices-not-enough-compatible} in this order.
% If at least one reduction rule applied, we start a new iteration.
Note that this branching algorithm also implies an FPT-algorithm for the parameter~$n-k$ for the decision version of the two problems, where we ask whether~$G$ has a triangle 2-club of size at least~$k$.

\begin{proposition}
\label{prop-vertex-and-edge-triangle-fpt-dual}
In $\Oh(2^{n-k}nm)$~time, we can decide whether~$G$ contains a~vertex-$\ell$-triangle 2-club~$S$ (an edge-$\ell$-triangle 2-club~$S$) of size at least~$k$, and return~$S$ if this is the case. 
\end{proposition}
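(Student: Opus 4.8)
The plan is to run the marked branching algorithm (Algorithm~\ref{algo-marked-branching}) on the instance~$(G,\ell)$ with~$M=\emptyset$, augmented by a single pruning rule: abort a branch as soon as the total number of deleted vertices exceeds~$n-k$, since the current graph then has fewer than~$k$ vertices and cannot contain a triangle 2-club of size at least~$k$. I would establish correctness by a solution-preservation invariant and bound the running time by showing that each branching step deletes at least one vertex, so that the search tree has~$\Oh(2^{n-k})$ nodes, each processed in~$\Oh(nm)$ time.

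For correctness, fix a hypothetical solution~$S$ of size at least~$k$ together with a witnessing spanning subgraph~$\widehat{G}=(S,\widehat{E})$ of diameter at most~$2$ in which every edge (vertex) lies in at least~$\ell$ triangles; in the vertex case one takes~$\widehat{G}=G[S]$. I would follow the unique root-to-leaf path whose branching decisions agree with~$S$: take the ``delete~$u$'' branch when~$u\notin S$ and the ``mark~$u$, delete~$w$'' branch when~$u\in S$, which forces~$w\notin S$ because~$u$ and~$w$ are incompatible and hence cannot both lie in~$S$. Along this path I maintain the invariant~$S\subseteq V(G')$,~$M\subseteq S$, and~$\widehat{E}\subseteq E(G')$ for the current graph~$G'$. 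The key point is that~$\widehat{G}\subseteq G'$ holds throughout, so triangle counts and small distances in the witness transfer to~$G'$: no vertex or edge of the witness is removed by the LTR or LDR, since it already lies in at least~$\ell$ triangles in~$\widehat{G}\subseteq G'$, and no vertex of~$S$ is removed by the IRR or flagged by the MIR, since any two vertices of~$S$ are at distance at most~$2$ in~$\widehat{G}$ and hence in~$G'$. Therefore this path reaches a leaf at which~$G'$ has no incompatible pair and the LTR/LDR no longer apply; such a~$G'$ has diameter at most~$2$ and the required triangle property, so~$V(G')$ is a triangle 2-club with~$S\subseteq V(G')$ and thus~$|V(G')|\ge k$. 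Conversely, the vertex set~$V(G')$ at every branching-free leaf is itself a valid triangle 2-club by exactly these two properties, so reporting the first leaf with~$|V(G')|\ge k$ correctly decides the instance and returns a solution.

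For the running time, the ``delete~$u$'' branch removes~$u$ and the ``mark~$u$, delete~$w$'' branch removes~$w$, so each branching step strictly decreases the remaining budget (namely~$n-k$ minus the number of deletions so far) in both of its children. With the pruning rule, any root-to-leaf path contains at most~$n-k$ branching steps, so the binary search tree has~$\Oh(2^{n-k})$ nodes. At each node, all pairwise distances at most~$2$ — and thus every incompatible pair as well as the applicability of the IRR and MIR — can be determined by a breadth-first search from each vertex in~$\Oh(n(n+m))=\Oh(nm)$ time (isolated vertices, which cannot belong to a triangle 2-club, are discarded first); the LTR runs in~$\Oh(m\cdot d)\subseteq\Oh(nm)$ time by Lemma~\ref{lem:ltr-time} and the LDR in~$\Oh(m)$ time. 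Hence each node costs~$\Oh(nm)$ and the overall running time is~$\Oh(2^{n-k}\,nm)$.

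The step I expect to be the main obstacle is the solution-preservation invariant, especially for the edge variant: because the reduction rules operate on the current graph~$G'$, which is in general strictly larger than the witness, I must argue that they never destroy the witness, which requires carefully tracking~$\widehat{G}\subseteq G'$ and transferring its triangle and distance guarantees to~$G'$. The depth bound of~$n-k$ is the other essential ingredient, but it is comparatively immediate once one observes that each of the two branches deletes a vertex.
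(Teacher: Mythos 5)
Your proposal is correct and follows essentially the same route as the paper: run the \texttt{MarkedBranching} algorithm, charge $\Oh(nm)$ per search-tree node (exhaustive LDR/LTR plus computing all incompatible pairs), and bound the depth by $n-k$ because the first branch deletes~$u$ directly while the second deletes~$w$ via the IRR after marking~$u$. The only difference is that you spell out the solution-preservation invariant (tracking the witness~$\widehat{G}\subseteq G'$) explicitly, which the paper leaves implicit since the safeness of the rules and the branching was argued earlier in the text.
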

\begin{proof}
  By Lemma~\ref{lem:ltr-time}, we may exhaustively apply the LDR and the LTR in $\Oh(md)=\Oh(nm)$~time. 
  We can also compute all pairs of incompatible vertices of~$G$ in~$\Oh(nm)$ time. 
  Thus, each application of the IRR, the MIR and the test whether further branching is necessary can be performed in~$\Oh(nm)$ time. This bounds the running time for each search tree node. After each application of the rules and after each recursive call created by the branching, the number of vertices is decreased by at least one, either directly (in the first case) or since the IRR is triggered (in the second case). Thus, at nodes of depth~$n-k$ we may conclude that the branch does not give a solultion of size~$k$. Hence, the depth of the search tree is at most~$n-k$ and the total search tree size is~$\Oh(2^{n-k})$. 
\end{proof}

\begin{algorithm}[t]
\KwIn{An instance~$(G=(V,E),\ell)$ and a set of marked vertices~$M \subseteq V$.}
\KwOut{A maximum-cardinality triangle 2-club~$S$ with~$M \subseteq S$.}

Apply Rules~\ref{rr-remove-vertices-to-low-degree}, \ref{rr-remove-vertices-to-low}, \ref{rr-remove-incompatible},~\ref{rr-trivial-no},~\ref{rr-remove-vertices-too-small-2-neighborhood}, and~\ref{rr-remove-vertices-not-enough-compatible} on the input instance\label{t2c-line-rr}\\
\If {all~$v \in V$ are pairwise compatible,}{
return~$V$
}
\Else {
Find an unmarked vertex $u$ that is incompatible to at least one other vertex\\
\texttt{$S_1 \coloneqq$ MarkedBranching($G-u,\ell,M$)}\\
\texttt{$S_2 \coloneqq$ MarkedBranching($G,\ell, M \cup \{u\}$)}\\
return~$\text{arg\,max}_{S \in \{S_1,S_2\}} |S|$
}
\caption{\texttt{MarkedBranching} Algorithm.} \label{algo-marked-branching}
\end{algorithm}

\subsection{Marking Rules} 
We describe two further rules to identify vertices that need to be marked based on the set of currently marked vertices. 

Let~$v$ and~$u$ be vertices of the input graph. If~$v$ is contained in less than $\ell$~triangles after deleting~$u$, then~$u$ has to be contained in every solution that contains~$v$. Consequently, if~$v$ is marked, we may also mark~$u$. Since a solution of~\textsc{Edge Triangle 2-Club} also fulfills the vertex-$\ell$-triangle property, this also holds for the edge-variant. 
This observation leads to the following reduction rule.

\begin{rrule}[Cascading Rule (CR)]
\label{rr-no-choice1}
Let~$v$ be a marked vertex and let~$u$ be an unmarked vertex. Let~$x_v$ be the number of triangles which contain~$v$, and let~$x_{uv}$ be the number of triangles which contain~$u$ and~$v$.
If~$x_v-x_{uv}<\ell$, then mark~$u$.
\end{rrule}

Suppose that two non-adjacent marked vertices~$u$ and~$w$ have exactly one common neighbor. Then, this unique common neighbor needs to be part of the solution~$S$ since otherwise the distance between~$u$ and~$w$ in~$G[S]$ (in case of \textsc{Vertex Triangle 2-Club}) or in~$(S,\widehat{E})$ (in case of \textsc{Edge Triangle 2-Club}) becomes larger than~$2$. Consequently, we may mark the unique common neighbor.

\begin{rrule}[No-Choice Rule (NCR)]
\label{rr-no-choice2}
Let~$u$ and~$w$ be two marked vertices. If~$uw\notin E(G)$ and both vertices have exactly one common neighbor~$v$, then mark~$v$.
\end{rrule}

%We refer to the algorithm variant that uses the Basic Rules and the Marking Rules and \texttt{Marked Branching} as the \texttt{Basic} version.
\subsection{Conflict Graphs}
\label{sec-conflict-graph}

In the implementation of our algorithm, for the computation of one of the lower bounds, and for one reduction rule, we make use of the following representation of the incompatible vertex pairs of the graph: Given an instance~$(G=(V,E), \ell)$, the \emph{conflict graph of~$G$} is a graph~$G_c$ on the same vertex set as~$G$ such that two vertices are adjacent in~$G_c$ if and only if they are incompatible in~$G$.

We now bound the time needed for constructing and updating the conflict graph, as we will make use of this bound for proving running time bounds for the computation of one of the lower bounds. 

\begin{lemma}\label{lem:gc-time}
  The conflict graph~$G_c$ of a graph~$G$ can be constructed in~$\Oh(nm)$ time. 
  After deleting a vertex~$w\in V$ in~$G$, the conflict graph can be updated in $\Oh(\deg(w)\cdot m)$~time; after deleting an edge in~$G$, the conflict graph can be updated in~$\Oh(m)$ time.    
\end{lemma}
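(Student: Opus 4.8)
The plan is to reduce everything to the task of computing, for a single vertex~$v$, its set of conflict-neighbors, which is exactly~$V\setminus N_2[v]$. For the construction I would, for each vertex~$v$, perform a truncated breadth-first search (equivalently, mark~$N[v]$ and then~$N(u)$ for every~$u\in N(v)$) to obtain~$N_2[v]$, and then add to~$G_c$ an edge~$vx$ for every~$x\notin N_2[v]$. Two distinct vertices~$u$ and~$w$ are compatible precisely when~$uw\in E$ or~$N(u)\cap N(w)\neq\emptyset$, so this correctly identifies all incompatible pairs. Exploring the second neighborhood of~$v$ costs~$\Oh(\sum_{u\in N(v)}\deg(u))$ time, and summing over all~$v$ gives~$\sum_{u\in V}\deg(u)^2\le (n-1)\sum_u\deg(u)=\Oh(nm)$; scanning the remaining~$\Oh(n)$ vertices per~$v$ to record conflict edges contributes~$\Oh(n^2)=\Oh(nm)$, where we may assume~$G$ is connected since the conflict graph is built on the connected local instances~$G_v$, whence~$n=\Oh(m)$. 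This yields the claimed~$\Oh(nm)$ construction time.

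For the updates, the key observation is that deleting a vertex or an edge can only increase pairwise distances, so a pair can only switch from compatible to incompatible, never the reverse; hence updating~$G_c$ amounts to finding the pairs that newly become incompatible and inserting the corresponding edges (together with removing the deleted vertex from~$G_c$). For a vertex deletion of~$w$, a pair~$(u,x)$ with~$u,x\neq w$ can lose compatibility only if every witnessing path of length at most two passed through~$w$; such a witness is either the edge~$ux$ (unaffected) or a length-two path through a common neighbor, so the only candidates are pairs with~$u,x\in N(w)$, for which~$w$ was a common neighbor. I would therefore recompute~$N_2[u]$ in~$G-w$ for each~$u\in N(w)$ and, for each~$x\in N(w)$ that has left~$N_2[u]$, insert the edge~$ux$ into~$G_c$. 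Each recomputation costs~$\Oh(\sum_{y\in N(u)}\deg(y))=\Oh(m)$, and there are~$\deg(w)$ such vertices~$u$, giving the claimed~$\Oh(\deg(w)\cdot m)$ bound.

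For an edge deletion~$e=ab$, the same monotonicity argument shows that only pairs having~$a$ or~$b$ as an endpoint can become incompatible: a path of length at most two using the edge~$ab$ has at most three vertices, and for~$ab$ to be one of its edges one of its endpoints must lie in~$\{a,b\}$. Thus it suffices to recompute~$N_2[a]$ and~$N_2[b]$ in~$G-e$ and update the conflict edges incident to~$a$ and to~$b$, which again takes~$\Oh(m)$ time.

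The main obstacle I anticipate lies not in the time accounting but in arguing the locality of the updates rigorously, that is, pinning down exactly which pairs can change their compatibility status and verifying that recomputing the second neighborhoods of the affected endpoints captures all of them. This rests on combining the monotonicity of distances under deletion with the characterization of distance-two compatibility via a shared edge or common neighbor.
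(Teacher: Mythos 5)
Your proof is correct and takes essentially the same approach as the paper's: a truncated BFS from each vertex to compute~$N_2[v]$ for the construction (with~$n=\Oh(m)$ justified by a mild side assumption, LDR-reducedness in the paper versus connectivity of the local instances in your write-up), and, after a deletion, BFS recomputations from the affected vertices---the neighbors of a deleted vertex~$w$, or the two endpoints of a deleted edge. Your explicit locality arguments (monotonicity of distances under deletion plus the characterization of which length-at-most-two witnesses can be destroyed) are simply a more detailed version of the paper's brief observation that new incompatibilities can only arise between neighbors of~$w$, respectively at an endpoint of the deleted edge.
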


\begin{proof}
We may assume that the input graph is reduced with respect to the LDR. Therefore,~$G$ contains no isolated vertices and~$n= \Oh(m)$.

The conflict graph can be constructed as follows: We start with an empty graph. For each vertex~$v$ we perform a truncated BFS to compute~$N_2[v]$ and add conflict edges~$vw$ to~$G_c$ for every~$w \not \in N_2[v]$. 
Consequently,~$G_c$ can be computed in~$\mathcal{O}(nm)$~time.

We next consider the running time for updating~$G_c$ after deleting some vertex~$w\in V$.
To this end, note that deleting~$w$ does not repair any incompatibility, but instead new incompatibilities may emerge, since two neighbors~$x$ and~$y$ of~$w$ may now be incompatible.
To find these new incompatibilities, we perform a BFS from each neighbor of~$w$. This can be done in~$\Oh(\deg(w) \cdot m)$~time.

In case of deleting an edge~$xy$, all new arising incompatibilities contain one of the vertices~$x$ or~$y$. Thus, two calls of BFS are sufficient to update~$G_c$ accordingly in~$\Oh(m)$~time.
\end{proof}

% We refer to the algorithm variant that works like \texttt{Basic} but additionally uses the conflict graph and the Matching Rule (Reduction Rule~\ref{rr-matching-in-conflict-graph}) as \texttt{Basic+UB}.

\section{Lower Bounds}
\label{sec-lb}
% Our algorithm stores the size~$k$ of a largest triangle~$2$-club detected so far.
% Then, our algorithm tries to find triangle~$2$-clubs of size at least~$k+1$.

To obtain good initial solutions, we implemented two heuristics.
These heuristics provide a lower bound for the size of the largest triangle~$2$-club during branching and also for some reduction rules.
The first heuristic determines the largest triangle~$2$-club in the closed neighborhood of any vertex~$v\in V(G)$ and the second one greedily determines a triangle~$2$-club in~$N_2[v]$ for any~$v\in V(G)$.

\subsection{A Lower Bound Based on Closed Neighborhoods} 
We next describe the known \emph{Neighborhood-Lower Bound (N-LB)}~\cite{AB19} and bound the worst-case running time for computing the lower bound. Moreover, we adapt the lower bound to the edge-triangle property.

For each vertex~$v\in V(G)$, we compute a largest triangle~$2$-club~$S_v$ that is contained in~$N[v]$ and also contains~$v$. 
To do this, we first construct the induced subgraph of~$N[v]$, denoted~$G_v$, and then exhaustively apply the LTR (Rule~\ref{rr-remove-vertices-to-low}) to~$G_v$. We let~$S_v$ be the vertex set of of the resulting graph. 
Note that~$S_v$ is a triangle~$2$-club:~$S_v$ fulfills the triangle property due to the LTR and~$S_v$ is a 2-club since~$v$ is universal in~$S_v$.
Moreover, since we search for a solution containing~$v$ we may abort if the LTR deletes~$v$.
The heuristic then returns a vertex set~$S$ which has maximum size of any of the sets~$S_v$, $v\in V(G)$, and the value of the N-LB is the size of~$S$. For the edge-variant, we use the same algorithm, now with the edge-variant of the LTR.

Next, we show that if a maximal triangle~$2$-club~$S$ has a universal vertex, then this heuristic will find~$S$.
In other words, in such a scenario the value of the N-LB is the size of the optimal solution.
For this, it is sufficient to show the following statement.

\begin{proposition}
\label{prop:neighborhood-heuristic-optimal}
For each~$v\in V(G)$, the solution~$S_v$ computed as described above is a largest triangle 2-club that is contained in~$G_v$.
\end{proposition}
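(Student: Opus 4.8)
The plan is to prove two things: (i) that $S_v$ is itself a triangle $2$-club contained in $G_v$, and (ii) that \emph{every} triangle $2$-club $T \subseteq N[v]$ satisfies $T \subseteq S_v$. Together these give that $S_v$ is not merely a largest such club but in fact contains all of them, which is stronger than what is claimed. The single tool driving both parts is the monotonicity of triangle counts under taking subgraphs: if $H' \subseteq H$, then any fixed vertex (or edge) lies in no more triangles in $H'$ than in $H$. Since the LTR only ever removes a vertex (edge) that currently lies in fewer than $\ell$ triangles, monotonicity should let me certify that each removed object can belong to no solution.

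For part (ii) I would induct on the order $o_1, o_2, \dots$ in which the LTR removes vertices (for \textsc{Vertex Triangle $2$-Club}) or edges (for \textsc{Edge Triangle $2$-Club}). Let $H_i$ be the current graph just before $o_i$ is deleted. In the vertex case, fix a vertex-$\ell$-triangle $2$-club $T \subseteq N[v]$; assuming inductively that $o_1,\dots,o_{i-1}\notin T$ gives $T \subseteq V(H_i)$, so $G[T] = H_i[T]$ is an induced subgraph of $H_i$. If $o_i \in T$, monotonicity bounds the number of triangles through $o_i$ in $G[T]$ by its number in $H_i$, which is below $\ell$, contradicting that $T$ is a vertex-$\ell$-triangle $2$-club. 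Hence $o_i \notin T$, and the induction closes with $T \subseteq S_v$. For the edge case I would run the same argument on the edge set $\widehat{E}$ of a witnessing spanning subgraph $(T,\widehat{E})$: each deleted edge is shown to lie outside $\widehat{E}$, so all edges of $\widehat{E}$ survive; since $(T,\widehat{E})$ has diameter at most $2$ it has no isolated vertex (assuming $|T|\ge 2$; singletons are trivial), so every vertex of $T$ retains a surviving edge and thus lies in $S_v$.

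For part (i) the \textsc{Vertex Triangle $2$-Club} case is immediate: exhaustive application of the LTR leaves every remaining vertex in at least $\ell$ triangles, and since $S_v \subseteq N[v]$ and only vertices are deleted, $v$ stays adjacent to all of $S_v$ and is therefore universal, forcing diameter at most $2$. The \textsc{Edge Triangle $2$-Club} case is the main obstacle: here the LTR deletes \emph{edges}, so edges incident to $v$ may disappear and it is no longer obvious that the reduced spanning subgraph $\widehat{G}_v$ has diameter at most $2$. I expect to resolve this by proving that $v$ nonetheless remains universal to every surviving vertex. The key quantitative fact is that any surviving vertex has degree at least $\ell+1$ (an incident surviving edge lies in at least $\ell$ triangles, contributing $\ell$ common neighbours plus the other endpoint). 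I would then induct on the order in which edges \emph{incident to $v$} are deleted: at the first such deletion $v$ is still universal, so the endpoint's neighbours are exactly its neighbours adjacent to $v$, which number fewer than $\ell$, so this endpoint cannot reach degree $\ell+1$ and cannot survive; for a later deletion the only non-neighbours of $v$ are endpoints of earlier $v$-edge deletions, which by induction do not survive, so after their removal the endpoint again has fewer than $\ell$ surviving neighbours and cannot survive. Hence no vertex whose $v$-edge is deleted survives, i.e.\ $v$ is universal in $\widehat{G}_v$ and the diameter is at most $2$.

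Finally I would assemble the pieces: part (i) shows $S_v$ is a triangle $2$-club in $G_v$, part (ii) shows it contains every other one, and hence $S_v$ is of maximum cardinality. One caveat worth stating explicitly is the degenerate case where the LTR deletes $v$ itself (the heuristic aborts): then part (ii) already certifies that no triangle $2$-club of $G_v$ contains $v$, which matches the intended reading that $S_v$ is a largest club \emph{containing $v$}.
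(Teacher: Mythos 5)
Your proof is correct, and its overall architecture is the paper's: show (i) that $S_v$ is itself a triangle $2$-club because $v$ remains universal in the reduced graph, and (ii) that every triangle $2$-club of $G_v$ survives the LTR, via a first-deleted-vertex (resp.\ first-deleted-edge) argument. Part (ii) and the vertex-variant half of part (i) match the paper's proof essentially step for step. Where you genuinely diverge is the crux of the edge variant: proving that $v$ stays adjacent to every surviving vertex, so that $G^*=(S_v,E^*)$ has diameter at most~$2$. The paper argues by triangle-count dominance among edges sharing the endpoint $u$: since $v$ is universal in $G_v$, every triangle $\{u,w,x\}$ with $x\neq v$ yields a triangle $\{u,v,x\}$, so $\{u,v\}$ always lies in at least as many triangles as any $\{u,w\}$; hence once $\{u,v\}$ is deleted, all edges at $u$ are deleted and $u$ vanishes as an isolated vertex. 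You instead use a quantitative survivor bound --- every vertex of $S_v$ has degree at least $\ell+1$ in $G^*$ --- and show that the endpoint of any deleted $v$-edge can finish with degree at most $\ell-1$, inducting over the $v$-edge deletions so that endpoints of earlier deletions (non-survivors by hypothesis) can be discounted. Each route buys something: the paper's dominance argument is shorter and gives the stronger timing statement that all of $u$'s edges die no later than $\{u,v\}$; but as written it quietly assumes the dominance relation still holds at deletion time, i.e.\ that the witnessing edges $vx$ are still present after $v$ may have lost other edges, which really requires exactly the induction over $v$-edge deletions that you carry out explicitly. Your degree-counting version makes that bookkeeping explicit and is therefore arguably more airtight, at the price of invoking the $(\ell+1)$-minimum-degree fact, which is specific to the edge variant. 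A further small point in your favor: you flag the singleton case in the edge variant, which the paper's step ``all edges of $\widehat{E}$ survive, therefore all vertices of $S$ survive'' silently skips, since it presumes $(S,\widehat{E})$ has no isolated vertex.
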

\begin{proof}
Let~$S$ be a largest triangle~$2$-club in~$G_v$.
We show that~$S_v$ contains all vertices of~$S$.

First, we show the statement for the vertex-variant. If~$S_v$ is nonempty, then the vertex~$v$ is not returned by the LTR: $v$ is universal and thus contained in a maximum number of triangles in~$G_v$. Thus,~$S_v$ is a triangle 2-club. Now, assume towards a contradiction that~$S_v$ does not contain all vertices of~$S$. Let~$w$ be the first
vertex of~$S$ which gets deleted by the algorithm and consider the application of the
LTR that deleted~$w$.  Since~$w\in S$,
we know that~$w$ is contained in at least~$\ell$ triangles in~$G_v$, all of these
triangles are present when~$w$ is deleted, a contradiction to the definition of the LTR. Thus, such a vertex~$w$ does not exist and the algorithm returns~$S$.

Second, we show the statement for the edge-variant. Let~$G^*=(S_v,E^*)$ denote the graph at the end of the algorithm. We first show that~$S_v$ is an edge-$\ell$-triangle~$2$-club in~$G$. We show this by showing that~$G^*$ has diameter at most~2 and every edge of~$E^*$ is in at least~$\ell$ triangles in~$G^*$. The latter claim follows from the fact that the LTR has been applied exhaustively. It remains to show that~$G^*$ has diameter at most~2. To show this, we prove that in~$G^*$, the vertex~$v$ is adjacent to all other vertices. Assume that the algorithm deletes an edge~$\{v,u\}$ that is incident with~$v$. Then all other edges~$\{u,w\}$ incident with~$u$ will also be deleted by the algorithm: Since~$v$ is a universal vertex in~$G_v$, for every triangle~$\{u,w,x\}$ with~$x\neq v$, the set~$\{u,v,x\}$ is also triangle. Hence, the number of triangles containing~$\{v,u\}$ is at least as large as the number of triangles containing~$\{u,w\}$. 
As a consequence, whenever an edge~$\{v,u\}$ is deleted, all edges incident with~$u$ are deleted. 
Consequently,~$u$ is deleted as well. Altogether,~$v$ is adjacent to every other vertex of~$G^*$. Thus,~$S_v$ is an edge-$\ell$-triangle~$2$-club.

It remains to show that~$S_v$ contains all vertices of~$S$. Assume towards a contradiction that this is not the case and let~$(S,\widehat{E})$ denote the corresponding spanning subgraph. Now, assume towards a contradiction, that the
algorithm deletes at least one edge of~$\widehat{E}$. Let~$e$ be the first edge
of~$\widehat{E}$ which gets deleted, and consider the application of the LTR that deletes~$e$.  Since~$e\in \widehat{E}$ we know
that~$e$ is contained in at least $\ell$~triangles in~$(S,\widehat{E})$. Since all edges
of~$\widehat{E}$ are present when~$e$ is deleted, it is contained in at least
$\ell$~triangles at the time of its deletion, a contradiction to the definition of the LTR.  Hence, such an edge $e$ does not exist
and thus~$G^*$ contains every edge of~$\widehat{E}$ and therefore also all vertices of~$S$.     
\end{proof}

Next, we bound the overall running time of computing the N-LB, again in terms of the number of edges~$m$ and the degeneracy~$d$ of the input graph.
\begin{proposition}
\label{prop:running-time-neighborhood-heuristic}
For \textsc{Vertex Triangle~$2$-Club} and for \textsc{Edge Triangle~$2$-Club} the overall running time of the N-LB is~$\Oh(m\cdot  d^2)$.
\end{proposition}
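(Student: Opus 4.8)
The plan is to bound the running time by summing, over all vertices~$v\in V(G)$, the cost of computing the solution~$S_v$ within the subgraph~$G_v=G[N[v]]$. The per-vertex work consists of two parts: constructing~$G_v$ and then applying the LTR exhaustively to~$G_v$. For the construction, we can build~$G_v$ by inspecting each neighbor of~$v$ and testing adjacencies among the neighbors; the key quantities are~$|N[v]|=\deg(v)+1$ vertices and~$m_v:=|E(G_v)|$ edges. By Lemma~\ref{lem:ltr-time}, applying the LTR exhaustively on~$G_v$ takes~$\Oh(m_v\cdot d(G_v))$ time, and since~$G_v$ is an induced subgraph of~$G$ we have~$d(G_v)\le d$. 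Thus the per-vertex cost is dominated by~$\Oh(m_v\cdot d)$ plus the cost of building~$G_v$.

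The heart of the argument is therefore to show that~$\sum_{v\in V(G)} m_v = \Oh(m\cdot d)$, i.e. that the total number of edges summed over all closed neighborhoods is~$\Oh(m\cdot d)$. First I would observe that an edge~$\{x,y\}\in E(G)$ contributes to~$m_v$ exactly for those~$v$ with~$\{x,y\}\subseteq N[v]$, namely when~$v\in\{x,y\}$ or~$v$ is a common neighbor of~$x$ and~$y$. Hence the number of neighborhoods containing a fixed edge~$\{x,y\}$ equals~$2$ plus the number of triangles of~$G$ containing~$\{x,y\}$. Summing over all edges, $\sum_{v} m_v = 2m + 3t$, where~$t$ is the number of triangles of~$G$ (each triangle being counted once per its three edges). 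Since a~$d$-degenerate graph has~$\Oh(m\cdot d)$ triangles (this is exactly the Chiba--Nishizeki bound already invoked in Lemma~\ref{lem:ltr-time}), we get~$\sum_v m_v=\Oh(m\cdot d)$.

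Combining the pieces, the total LTR cost over all vertices is~$\sum_v \Oh(m_v\cdot d)=\Oh(d\cdot\sum_v m_v)=\Oh(m\cdot d^2)$. For the construction of all the~$G_v$, I would argue this fits within the same bound: using the degeneracy ordering, the adjacency information needed to assemble each~$G_v$ can be gathered while enumerating triangles, so this preprocessing is absorbed into the~$\Oh(m\cdot d)$ triangle-enumeration cost and does not dominate. The edge-variant is handled identically, since the edge-LTR is also bounded by Lemma~\ref{lem:ltr-time} and the combinatorial identity~$\sum_v m_v=2m+3t$ is independent of which variant of the rule we apply.

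The main obstacle will be making the charging argument for~$\sum_v m_v$ fully rigorous, in particular correctly accounting for which neighborhoods an edge lies in (the ``$2$ plus number of triangles'' count) and carefully arguing that the per-vertex construction of~$G_v$ can indeed be realized within~$\Oh(m_v + \deg(v))$ time rather than, say, quadratically in~$\deg(v)$, so that it stays dominated by the LTR term. I would resolve the latter by building~$G_v$ only from the relevant edges identified during triangle enumeration, avoiding an explicit all-pairs adjacency scan over~$N[v]$.
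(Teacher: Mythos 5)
Your proof is correct, but the crucial counting step is genuinely different from the paper's. Both arguments share the same skeleton --- build $G_v=G[N[v]]$ for every $v$, apply the LTR via Lemma~\ref{lem:ltr-time} using $d(G_v)\le d$, and sum over $v$ --- but the paper bounds each subgraph \emph{locally}: $G_v$ is a $d$-degenerate graph on $\deg(v)+1$ vertices, hence $m_v=\Oh(\deg(v)\cdot d)$, and $G_v$ is constructed in $\Oh(\deg(v)\cdot d)$ time by scanning, for each $w\in N[v]$, the list of $w$'s at most $d$ later neighbors in a degeneracy ordering; summing $\Oh(\deg(v)\cdot d^2)$ over all $v$ gives $\Oh(m\cdot d^2)$. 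You instead bound the aggregate $\sum_v m_v$ \emph{globally} by a charging argument: an edge $\{x,y\}$ lies in $G_v$ exactly when $v\in\{x,y\}$ or $\{v,x,y\}$ is a triangle, giving the exact identity $\sum_v m_v = 2m+3t$, and then $t=\Oh(m\cdot d)$ for $d$-degenerate graphs. Both routes are valid and yield the same stated bound. The paper's local bound is slightly more self-contained (no global triangle count is needed, and the construction routine is spelled out concretely), while your global identity is sharper in a finer-grained sense: it shows the total LTR work is $\Oh\bigl((m+t)\cdot d\bigr)$, which improves on $\Oh(m\cdot d^2)$ whenever the graph has few triangles, and your observation that $E(G_v)$ consists of the $\deg(v)$ edges at $v$ plus one edge per triangle through $v$ also gives a clean way to assemble all subgraphs in $\Oh(m\cdot d)$ total time directly from the per-vertex triangle lists that the preprocessing of Lemma~\ref{lem:ltr-time} computes anyway, which makes your construction step rigorous.
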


\begin{proof}
  Fix some degeneracy ordering~$\sigma$ of~$G$.  For each vertex~$v$, the graph~$G_v$ can
  be constructed in $\Oh(\deg(v)\cdot d)$~time by considering all neighbors~$w$ of~$v$ and
  then traversing the list containing all neighbors~$u$ of~$w$ that appear after~$w$
  in~$\sigma$. Thus, the total running time for computing all induced subgraphs
  is~$\Oh(\sum_{v\in V(G)} \deg(v) \cdot d)=\Oh(m\cdot d)$. For each vertex~$v$, the
  graph~$G_v$ is~$d$-degenerate and thus has~$\Oh(\deg(v)\cdot d)$ edges. By
  Lemma~\ref{lem:ltr-time}, the exhaustive application of the LTR on~$G_v$ thus
  takes~$\Oh(\deg(v)\cdot d^2)$ time.  Hence, the overall running time for the application
  of the rule for all graphs~$v\in V(G)$
  is~$\Oh(\sum_{v\in V(G)} \deg(v) \cdot d^2)=\Oh(m\cdot d^2)$.  
\end{proof}
In the worst case, when~$d=\Theta(n)$, 
the running time bound of the N-LB becomes $\Oh(n^4)$ and thus is impractical. 
In sparse real-world graphs, however, the degeneracy takes on very small values and the running time bound guarantees that the algorithm is fast. In particular, when the degeneracy is constant, we achieve a linear running time.

\subsection{A Greedy Lower Bound}
We also use the following greedy algorithm to compute a lower bound, called \emph{Greedy-Lower Bound (G-LB)}.
For each vertex~$v\in V(G)$, we compute a triangle~$2$-club~$S_v$ that is contained in the closed second neighborhood~$N_2[v]$ and contains~$v$ as follows: 
 We first construct the induced subgraph~$G_v\coloneqq G[N_2[v]]$ and apply the LTR to~$G_v$.
By~$S_v$ we denote the vertex set of~$G_v$ after the application of the LTR.
Each vertex in~$S_v$ or each edge with two endpoints in~$S_v$ is contained in at least~$\ell$ triangles.
However,~$S_v$ is not necessarily a~$2$-club.
To solve this issue, we  test whether~$S_v$ is a~$2$-club and return a pair~$\{u,w\}$ of incompatible vertices if this is not the case.
If~$u=v$ or~$w=v$, then we delete the other vertex since we aim to find the largest triangle~$2$-club containing~$v$.
Otherwise, we greedily delete the vertex of~$u$ and~$w$ which is in less triangles in~$G_v$.
Afterwards, we again apply the LTR.
We continue with this procedure until the resulting vertex set~$S_v$ is a triangle~$2$-club. The heuristic then returns a vertex set~$S$ which has maximum size of any of the sets~$S_v$, $v\in V(G)$, and the value of the G-LB is the size of~$S$.

Next, we bound the overall running time of the G-LB.
\begin{proposition}
\label{prop:running-time-2neighborhood-heuristic}
For \textsc{Vertex Triangle~$2$-Club} and for \textsc{Edge Triangle~$2$-Club} the overall running time of the G-LB is~$\Oh(m\cdot \Delta^3\cdot d^2)$.
\end{proposition}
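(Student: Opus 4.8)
The plan is to analyze the heuristic one source vertex at a time and then sum the per-vertex costs, the crucial point being to express the size of each local graph~$G_v = G[N_2[v]]$ in terms of~$\deg(v)$, $\Delta$, and~$d$, rather than in terms of~$\Delta$ alone. First I would bound the number of vertices of~$G_v$ by~$n_v := |N_2[v]| = \Oh(\deg(v)\cdot\Delta)$, since every vertex of~$N_2[v]$ lies in the closed neighborhood of one of the~$\deg(v)$ neighbors of~$v$, each of which has at most~$\Delta$ neighbors. Because~$G_v$ is an induced subgraph of the~$d$-degenerate graph~$G$, it is itself~$d$-degenerate, so its edge count is~$m_v \le d\cdot n_v = \Oh(\deg(v)\cdot\Delta\cdot d)$. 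Building~$G_v$ costs~$\Oh(\deg(v)\cdot\Delta^2)$ time, which will turn out to be of lower order.

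Next I would bound the work done in the main loop for a fixed~$v$. Every iteration of the loop deletes at least one vertex (the resolved incompatible vertex, or a vertex removed by the subsequent application of the LTR), so the loop runs for~$\Oh(n_v)$ iterations. Each iteration applies the LTR exhaustively to the current subgraph, which by Lemma~\ref{lem:ltr-time} costs~$\Oh(m_v\cdot d)$; over all~$\Oh(n_v)$ iterations this contributes~$\Oh(n_v\cdot m_v\cdot d) = \Oh(\deg(v)^2\cdot\Delta^2\cdot d^2)$ time. To locate incompatible pairs cheaply I would maintain the conflict graph of the current working subgraph: by Lemma~\ref{lem:gc-time} it is built once in~$\Oh(n_v\cdot m_v)$ time, after which an incompatible pair can be read off. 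The important observation is that its update cost must be \emph{amortized} rather than charged per deletion: deleting a vertex~$w$ costs~$\Oh(\deg(w)\cdot m_v)$ and deleting an edge costs~$\Oh(m_v)$, and since the deleted vertices satisfy~$\sum_w \deg(w)\le 2m_v$ while at most~$m_v$ edges are ever deleted, all conflict-graph updates over the whole run for~$v$ together cost only~$\Oh(m_v^2) = \Oh(\deg(v)^2\cdot\Delta^2\cdot d^2)$. Since the LTR and the conflict-graph primitives have these same costs in both problem variants, the analysis is identical for \textsc{Vertex Triangle~$2$-Club} and \textsc{Edge Triangle~$2$-Club}.

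Finally I would sum over all~$v\in V(G)$. Each vertex contributes~$\Oh(\deg(v)^2\cdot\Delta^2\cdot d^2)$, and using~$\deg(v)\le\Delta$ I obtain~$\sum_{v}\deg(v)^2 \le \Delta\sum_v \deg(v) = 2\Delta m$, so the total is~$\Oh(\Delta^2 d^2\cdot\Delta m) = \Oh(m\cdot\Delta^3\cdot d^2)$, as claimed. The construction term~$\Oh(m\cdot\Delta^2)$ and the conflict-graph-build term~$\Oh(m\cdot\Delta^3\cdot d)$ are of lower order and are absorbed.

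The main obstacle is arranging the accounting so that the bound is~$\Delta^3$ rather than~$\Delta^4$. The naive estimate~$n_v = \Oh(\Delta^2)$ combined with~$n = \Oh(m)$ loses a factor of~$\Delta$; it is essential to keep the refined bound~$n_v = \Oh(\deg(v)\cdot\Delta)$ and to finish with~$\sum_v \deg(v)^2 \le 2\Delta m$. Equally essential is to amortize the conflict-graph updates via~$\sum_w \deg(w) = 2m_v$ instead of charging each of the~$\Oh(n_v)$ deletions a worst-case cost~$\Oh(\Delta\cdot m_v)$, which would inflate the total to~$\Oh(m\cdot\Delta^4\cdot d)$.
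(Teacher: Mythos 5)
Your proof is correct and follows essentially the same route as the paper's: bound $|V(G_v)|=\Oh(\deg(v)\cdot\Delta)$ and $|E(G_v)|=\Oh(\deg(v)\cdot\Delta\cdot d)$ via degeneracy, identify the conflict-graph construction and its amortized updates (using $\sum_{u}\deg_v(u)\le 2m_v$ and at most $m_v$ edge deletions) as the dominant $\Oh(\deg(v)^2\cdot\Delta^2\cdot d^2)$ term per source vertex, and sum over $v$ using $\sum_v\deg(v)^2\le 2\Delta m$. The only (harmless) deviation is that you charge a full exhaustive LTR pass per deletion, giving $\Oh(n_v\cdot m_v\cdot d)$ in total, whereas the paper amortizes the triangle-counter work to $\Oh(\deg(v)\cdot\Delta\cdot d^2)$ overall; both accountings stay within the claimed bound.
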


\begin{proof}
The graph~$G_v=G[N_2[v]]$ has less than $\deg(v)\cdot \Delta$~vertices and can be constructed in $\Oh(\deg(v)\cdot \Delta\cdot d)$~time using the degeneracy-ordering adjacency lists as detailed in the running time bound proof for the N-LB. 
  The number of triangles in~$G_v$ is~$\Oh(\deg(v)\cdot \Delta\cdot d^2)$ and again, these triangles can be computed in the corresponding running time. The total time for checking the triangle counters and deleting vertices or edges when their counters are too low is again upper-bounded by the number of triangles in~$G_v$ and thus bounded by~$\Oh\left(\deg(v)\cdot \Delta\cdot d^2\right)$. 
  The main running time bottleneck is the repeated check whether the current graph~$G_v$ has diameter~2 for which we make use of the conflict graph. By Lemma~\ref{lem:gc-time},  the conflict graph can be constructed in~$\Oh(\deg(v)^2\cdot \Delta^2 d)$ time.  
  Afterwards, we need to update the conflict graph after vertex and edge deletions. The running time for the update after deleting a vertex~$u$ is~$\Oh(\deg(u)\cdot m)$, again by Lemma~\ref{lem:gc-time}.
  The worst-case overall running for the updates thus is~$$\Oh\left(\sum_{u\in N_2[v]}\deg_v(u) \cdot \deg(v)\cdot \Delta\cdot d)=\Oh(\deg(v)^2\cdot \Delta^2\cdot d^2\right),$$ where~$\deg_v(u)$ is the degree of~$u$ in~$G_v$, if we essentially delete all vertices and edges of~$G_v$ either by reduction rules or by greedy choice. 
  The total running time for computing the lower bounds for all vertices is~$\Oh(\sum_{v\in V(G)}\deg(v)^2\cdot \Delta^2\cdot d^2)=\Oh(m\cdot \Delta^3\cdot d^2)$.
\end{proof}

Again, on very dense graphs, the running time is impractical. While the running time becomes acceptable on sparse graphs with moderate maximum degree, it is still much higher than the running time for the N-LB. 
We also observed this in our implementation, which prompted us to add several speed-ups to the implementation of the G-LB (see Section~\ref{sec:imp-detail}).

\subsection{Lower-Bound-Based Reduction Rules}
\label{sec:lb-rules}

% The next rule deals with (marked) vertices that have a relatively small number of compatible vertices. Let~$(G,\ell)$ be an instance of \textsc{Vertex Triangle $2$-Club} or \textsc{Edge Triangle~$2$-Club}.
We can make use of the lower bounds not only for pruning during the branching but also in some reduction rules. These rules are applied before and during the branching algorithm. To describe the rules, we let~$k$ denote the size~$k$ of a largest triangle~$2$-club detected so far.
In other words, whenever we apply these rules, the aim is to find a triangle~$2$-club of size at least~$k+1$. Now, if a vertex~$u$ is compatible with at most~$k-1$ other vertices, then~$u$ cannot be contained in a solution of size larger than~$k$. If~$v$ is unmarked, then~$v$ can be deleted from the input instance. 
Otherwise, if~$v$ is marked, we may discard the instance.

We present two variants of this rule. In the 2-NR (Rule~\ref{rr-remove-vertices-too-small-2-neighborhood}) we directly count the number of compatible vertices of some vertex~$v$, that is, we check if~$N_2[v]$ has size at least~$k$. In the LCR (Rule~\ref{rr-remove-vertices-not-enough-compatible}) we make use of the conflict graph.

%We first consider the special case where the two-neighborhood of a vertex is too small. 
%More precisely, since all vertices that are compatible with a given vertex~$v$ are contained in~$N_2[v]$, a vertex with~$|N_2[v]| \leq k$ cannot be part of a solution of size larger than~$k$.

\begin{rrule}[$2$-Neighborhood Rule (2-NR)]
\label{rr-remove-vertices-too-small-2-neighborhood}
Delete all vertices~$v$ such that $|N_2[v]|\le k$. If a marked vertex is deleted, report that there is no solution of size larger than~$k$.
\end{rrule}

%Next, we consider a more general rule, where we count the number of incompatibilities.

\begin{rrule}[Low-Compatibility Rule (LCR)]
\label{rr-remove-vertices-not-enough-compatible}
Delete vertices which have a degree of at least~$n-k+1$ in the conflict graph.
If a marked vertex is deleted, report that there is no solution of size larger than~$k$.
\end{rrule}

% Abusing our notation, we refer to the rules LDR (Reduction Rule~\ref{rr-remove-vertices-to-low-degree}), LTR (Reduction Rule~\ref{rr-remove-vertices-to-low}), 2-NR (Reduction Rule~\ref{rr-remove-vertices-too-small-2-neighborhood}), and LCR (Reduction Rule~\ref{rr-remove-vertices-not-enough-compatible}) as the \emph{Basic Rules}.

Finally, we can make use of the following reduction rule which computes an upper bound based on the conflict graph. This rule was also used for other \textsc{2-Club} variants~\cite{HKN15,KNNP19}.

\begin{rrule}[Matching Rule]
\label{rr-matching-in-conflict-graph}
Compute the size~$b$ of a maximum matching for~$G_c$.
If~$|V(G_c)|-b$ is at most~$k$, then return no.
\end{rrule}
\begin{proposition} Rule~\ref{rr-matching-in-conflict-graph} is correct.
\end{proposition}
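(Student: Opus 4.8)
The plan is to show that every triangle $2$-club is an independent set in the conflict graph~$G_c$, and then to bound the size of any independent set of~$G_c$ from above by~$|V(G_c)|-b$. Together these two facts show that no triangle $2$-club can have more than~$|V(G_c)|-b$ vertices, so that the hypothesis~$|V(G_c)|-b\le k$ rules out any solution of size larger than~$k$ and the rule correctly returns no.

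First I would establish the independence claim for both problem variants. For \textsc{Vertex Triangle 2-Club}, a solution~$S$ induces a subgraph~$G[S]$ of diameter at most~$2$, so any two~$u,w\in S$ satisfy~$\dist_G(u,w)\le \dist_{G[S]}(u,w)\le 2$; hence~$u$ and~$w$ are compatible and thus non-adjacent in~$G_c$. For \textsc{Edge Triangle 2-Club}, the solution~$S$ is witnessed by a spanning subgraph~$\widehat{G}=(S,\widehat{E})$ of diameter at most~$2$; since~$\widehat{G}$ is a subgraph of~$G$, distances in~$G$ are no larger, so~$\dist_G(u,w)\le \dist_{\widehat{G}}(u,w)\le 2$ for all~$u,w\in S$, and again no two vertices of~$S$ are adjacent in~$G_c$. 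In either case~$S$ is an independent set in~$G_c$. The second step is the standard bound linking the independence number and the matching number: a maximum matching of~$G_c$ with~$b$ edges forces every vertex cover of~$G_c$ to pick at least one endpoint of each matching edge, so the minimum vertex cover has size at least~$b$; since the complement of any independent set is a vertex cover, every independent set of~$G_c$ has at most~$|V(G_c)|-b$ vertices. Combining the two steps yields that every triangle $2$-club has at most~$|V(G_c)|-b$ vertices, and the correctness of the rule follows immediately.

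The step requiring the most care is the independence claim for the edge variant, since there compatibility is defined via distances in~$G$ while the $2$-club structure is certified only by the spanning subgraph~$\widehat{G}$. The argument hinges on the inequality~$\dist_G(u,w)\le \dist_{\widehat{G}}(u,w)$, which holds because~$\widehat{G}$ is a subgraph of~$G$; I would therefore state this subgraph--distance relation explicitly so that the reduction of the edge variant to the same conflict-graph reasoning as the vertex variant is transparent. The matching bound itself is elementary and holds for arbitrary graphs (no bipartiteness or König-type equality is needed), so it poses no obstacle.
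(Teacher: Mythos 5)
Your proof is correct and follows essentially the same approach as the paper: every solution is an independent set in the conflict graph~$G_c$, and a matching of size~$b$ caps any independent set at~$|V(G_c)|-b$ vertices. The only cosmetic differences are that you derive the cap via the vertex-cover complement while the paper counts directly (at most one endpoint of each matching edge lies in~$S$), and that you spell out the independence claim separately for the vertex and edge variants, which the paper leaves implicit.
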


\begin{proof}
Let~$M$ be a matching of size~$b$ in~$G_c$. To prove the correctness of the rule, we show that every solution~$S$ has size at most~$|V(G_c)|-b$. 

Let~$S$ be a solution. Then,~$S$ does not contain two vertices that are adjacent in~$G_c$ and thus, $S$~contains at most~$b$ vertices that are endpoints of edges in~$M$. Therefore,~$S$ consists of at most~$|V(G_c)|-b$ vertices.
\end{proof}

\section{Implementation Details}
\label{sec:imp-detail}

\begin{figure}
\centering
\begin{tikzpicture}
\draw[draw=black] (-0.9,-0.9) rectangle ++(1.8,0.6);
\node[label={Basic Rules}](A1) at (0, -0.95) {};

\draw[draw=black,fill=light-gray] (1.7,-1.8) rectangle ++(1.8,2.3);
\node(Z1) at (0.9, -0.6) {};
\node(Z2) at (1.7, -0.6) {};
\path [->,line width=0.5mm] (Z1) edge (Z2);

\node[label={N-LB}](B1) at (2.6, -0.25) {};
\node(E1) at (2.6, -0.05) {};
\node(E2) at (2.6, -0.75) {};
\path [->,line width=0.5mm] (E1) edge (E2);
\node[label={2-NR}](B2) at (2.6, -1.25) {};
\node[label={Basic Rules}](B3) at (2.6, -1.75) {};

\draw[draw=black,fill=lightgray] (4.3,-1.8) rectangle ++(1.8,2.3);
\node(X1) at (3.5, -0.6) {};
\node(X2) at (4.3, -0.6) {};
\path [->,line width=0.5mm] (X1) edge (X2);

\node[label={G-LB}](C1) at (5.2, -0.25) {};
\node(F1) at (5.2, -0.05) {};
\node(F2) at (5.2, -0.75) {};
\path [->,line width=0.5mm] (F1) edge (F2);
\node[label={2-NR}](C2) at (5.2, -1.25) {};
\node[label={Basic Rules}](C3) at (5.2, -1.75) {};

\draw[draw=black] (6.8,-2.35) rectangle ++(4.4,3.7);
\node(X1) at (6.1, -0.6) {};
\node(X2) at (6.9, -0.6) {};
\path [->,line width=0.5mm] (X1) edge (X2);

\node[label={For each~$v\in V(G)$ do}](D0) at (9, 0.5) {};
\node[label={\texttt{MarkedBranching}$(G,\ell,\{ v\})$}](D1) at (9, 0) {};
\node(G1) at (9, 0.2) {};
\node(G2) at (9, -0.5) {};
\path [->,line width=0.5mm] (G1) edge (G2);
\node[label={uses}](RR) at (9.5, -0.45) {};
\node[label={Basic Rules}](D2) at (9, -1) {};
\node[label={LCR/2-NR}](D3) at (9, -1.5) {};
\node[label={Marking Rules}](D4) at (9, -2) {};
\node[label={Matching Rule$^*$}](D5) at (9, -2.5) {};
\end{tikzpicture}
\caption{Sequence of our algorithm. 
Everything described in white boxes is done in each of the four variants.
Everything within the \colorbox{light-gray}{light-gray box} is done in the two variants using a lower bound (\texttt{N-LB} and \texttt{Multi-LB}).
Furthermore, the \colorbox{lightgray}{gray box} is used only for \texttt{Multi-LB}.
The Matching Rule (marked with an asterisk (*)), is used in each variant except \texttt{Basic}.}
\label{fig-algo-seqeunce}
\end{figure}
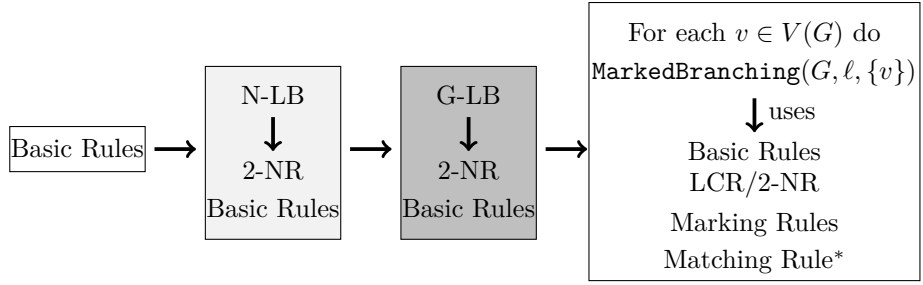

To evaluate the effectiveness of our ideas described in Sections~\ref{sec-bran} and~\ref{sec-lb} in terms of decreasing the total running time, we use four different variants of our algorithm in our evaluation.
An outline of our algorithm and our four variants (\texttt{Basic}, \texttt{Basic+UB}, \texttt{N-LB}, and \texttt{Multi-LB}) is given in Figure~\ref{fig-algo-seqeunce}. Herein, we refer to the IRR, MIR, CR, and NCR as the \emph{Marking Rules}.

\texttt{Basic} uses all reduction rules except the one based on the conflict graph, \texttt{Basic+UB} additionally uses the conflict graph, \texttt{N-LB} additionally uses the N-LB, and \texttt{Multi-LB} additionally uses the G-LB.

\paragraph{Data Structures.}
Our algorithm assigns each vertex a unique ID.
To represent a graph we use hash sets and hash maps.
More precisely, for the vertex set we rely on a hash set to allow for a fast check whether a vertex is present.
The set of edges is organized in adjacency lists. 
More precisely, a hash map uses the vertex IDs to
assign each vertex a distinct hash set, containing its neighbors, that is, the IDs of each adjacent vertex. 
Furthermore, we use a stack to reverse the operations in the search tree efficiently.

After the initial exhaustive application of the LDR before we compute the lower bounds (the N-LB or the G-LB), we compute all triangles of the graph.
Then, for each vertex~$v$ a hash map with all triangles containing~$v$ is created.
This allows for a fast check of all operations related to triangles, for example, whenever we compute the set of triangles containing one edge~$vw$.

\paragraph{Compatibility Test and Conflict Graph.}
To test whether two vertices~$u$ and~$w$ are compatible, that is, have distance at most~$2$, we first check whether~$u=w$.
Afterwards, we check whether~$u$ and~$w$ are adjacent and then we check if~$u$ and~$w$ have a common neighbor.

Before we branch on some vertex~$v$, we create the conflict graph~$G_c$ of~$G_v$ by using the above-mentioned compatibility test.
We do \emph{not} create the conflict graph of the complete input graph since this is too time-consuming and needs too much memory.
Whenever a vertex~$w$ is deleted from~$G_v$, we also delete~$w$ from~$G_c$.
Afterwards, we update~$G_c$ according to the algorithm described in the proof of Lemma~\ref{lem:gc-time}.

\paragraph{Reduction Rules.}

Next, we give some implementation details for reduction rules which are not implemented in a straightforward manner.
Let~$d_{\text{min}}$ be the minimum degree of any vertex in a triangle~$2$-club.
Recall that~$d_{\text{min}}$ is at most~$ 1/2+\sqrt{1/4+2\ell}$ for the vertex-variant and that~$d_{\text{min}}=\ell+1$ for the edge-variant.
In one application of the LDR (Rule~\ref{rr-remove-vertices-to-low-degree}), we first store all vertices of~$G$ that have degree smaller than~$d_{\text{min}}$.
Afterwards, we delete all of these vertices.
In one application of the LTR (Rule~\ref{rr-remove-vertices-to-low}) we similarly store all vertices or edges which are in less than~$\ell$ triangles, and then delete them simultaneously.
Also, if we apply the LDR or the LTR during branching, that is, when we aim to find a largest solution containing vertex~$v$, we abort this branch whenever the LDR or the LTR deletes~$v$.

In our implementations of the IRR (Rule~\ref{rr-remove-incompatible}), the MIR (Rule~\ref{rr-trivial-no}), and the LCR (Rule~\ref{rr-remove-vertices-not-enough-compatible}), which are all applied during branching on some vertex~$v$, we count the number of incompatibilities of each vertex~$w\in V(G_v)$ and stop if~$|V(G_v)|-k$ incompatibilities are found (where~$k$ is the size of a largest triangle~$2$-club detected so far).

\paragraph{LCR vs 2-NR.} %By~$k$ we denote the size of a largest triangle~$2$-club detected so far.
Recall that the LCR and the 2-NR (Rule~\ref{rr-remove-vertices-too-small-2-neighborhood}) serve the same purpose; identifying vertices which have at most~$k-1$ compatible vertices.
The 2-NR is faster than the LCR if~$N_2[v]$ is small.
Hence, we have chosen to use the 2-NR for low densities.
Another reason for our choice is that if the density of~$N_2[v]$ is large,  the LCR is not much slower than the 2-NR.
Here, the \emph{density} is the ratio between the edges present in a graph and the maximum number of edges that the graph can contain.
In other words, on the basis of the density of~$N_2[v]$ the algorithm decides whether it uses the 2-NR or the LCR in each subsequent call of \texttt{MarkedBranching}$(G,\ell,\{v\})$.
We use the 2-NR if the density of~$N_2[v]$ is at most~$0.05$; otherwise the LCR is used.

\paragraph{Lower Bounds.}
For the G-LB and branching, we sort the vertices descendingly according to the size of their $2$-neighborhood.
Since the computation of the N-LB is fast, we use an arbitrary vertex ordering to compute the N-LB.
After the branching on vertex~$v$ is completed, vertex~$v$ is deleted from the graph.

Before we compute the G-LB, we use the N-LB, to obtain an initial lower bound.
The subsequent applications of the Basic Rules and the 2-NR delete many further vertices from the graph and thus the G-LB has to be computed only for a fraction of the initial vertices.
Since the running time for the G-LB is much larger than for the N-LB, one would expect that this order of applying the lower bounds decreases the total running time. 
This was confirmed in preliminary experiments.
In the following, we denote this combined lower bound by \emph{Multi-LB}.

Before we compute the G-LB for some vertex~$v$, we compute an upper bound for the size of a solution containing vertex~$v$ by checking whether the size of~$N_2[v]$ is larger than~$k$.
If this is  not the case, no solution of size at least~$k+1$ containing~$v$ exists, and thus it is safe to delete~$v$ from~$G$.
The deletion of~$v$ is beneficial since after this deletion we may now also observe that~$|N_2[w]|\le k$ for some other vertex~$w$, and hence we may also delete~$w$ from~$G$.
If~$|N_2[v]|>k$, we construct the subgraph induced by~$N_2[v]$. 
Afterwards, we use the LDR as described above to delete vertices which cannot have sufficiently many triangles.

After the computation of the N-LB, we apply the LTR in the following way:
Step 1: Apply LDR exhaustively.
Step 2: Perform one application of the LTR. 
If at least one vertex was deleted in this application, go back to Step~1.

After the computation of the Multi-LB, we construct the conflict graph (see Section~\ref{sec-conflict-graph}).
This allows us to use the LCR, that is, to delete vertices which cannot be part of a solution of at least~$k+1$ vertices.
Next, we establish the triangle property as follows:
\begin{enumerate}[label=\arabic*.]
\item We apply the LCR.
\item If the algorithm deleted at least one vertex in Step~1, we apply the LDR and  go back to Step~1.
\item We perform one iteration of the LTR, that is, we delete all vertices/edges in less than~$\ell$ triangles.
\item If the algorithm deleted at least one vertex in Step~3, we apply the LDR and  go back to Step~1.
\end{enumerate}

When the algorithm does not go back to Step~1 in Step~4, the triangle property is established.
Now, if the number of the remaining vertices in~$G_v$ after the application of all these reduction rules is at most~$k$, we delete~$v$ from the input graph.
This is correct since until this point we only applied reduction rules and hence there is no solution of size at least~$k+1$ containing~$v$.
With the deletion of~$v$ of the input graph during the lower bound computation we avoid having to apply all these reduction rules for~$v$ again during branching.

Next, similar to Rule~\ref{rr-matching-in-conflict-graph} we greedily compute a maximal matching~$M$ of the conflict graph.
For each edge in~$M$, we greedily delete the vertex which is in less triangles (vertex-variant) or has smaller degree (edge-variant).
Note that if any edge of~$M$ contains~$v$, then we always delete the other vertex because of the premise that we search for a solution containing~$v$.
Now, we repeat the above procedure of establishing the triangle property and computing a matching until the remaining vertex set fulfills the triangle and the diameter property.

Our implementation of the Multi-LB seems rather complicated, but preliminary experiments showed that this procedure is indeed necessary to achieve small running times.

\section{Experiments}

\subsection{Experimental Setup}

Each experiment was performed on a single thread of an Intel(R) Xeon(R) Silver 4116 CPU with
2.1 GHz, 24~CPUs and 128 GB RAM running Java openjdk 17.0.3.
Our algorithms are implemented in Java.
As benchmark data set we used 67 social, biological, and technical networks obtained
from the Network Repository~\cite{nr}, KONECT~\cite{Kun13}, and the 10th DIMACS
challenge~\cite{BKM+14}.
These networks range from less than 100 vertices to up to $300\,000$ vertices.
All networks except five are sparse with density at most~$0.05$.
Roughly 10 networks have less than 100 vertices, 25 networks have between 100 and~$1\,000$ vertices, 25 have between~$1\,000$ and~$10\,000$ vertices, and 10 have more than~$10\,000$ vertices.
Furthermore, we tested 27 different values for~$\ell$ in the range of 1 to 100.
More precisely, we tested our algorithm for each value in $$\{1, 2, \ldots, 6, 7, 9, 11, 13, 15, 20, 25, \ldots, 90, 100\}.$$
%More precisely, we tested~$\ell\in[7]$, from 7 to 15 we used each odd~$\ell$, and for integers larger than 15 we used every~$\ell$ which is divisible by 5.

\begin{figure}[t!]
\centering
\begin{minipage}[b]{0.45\textwidth}
\includegraphics[width=\textwidth]{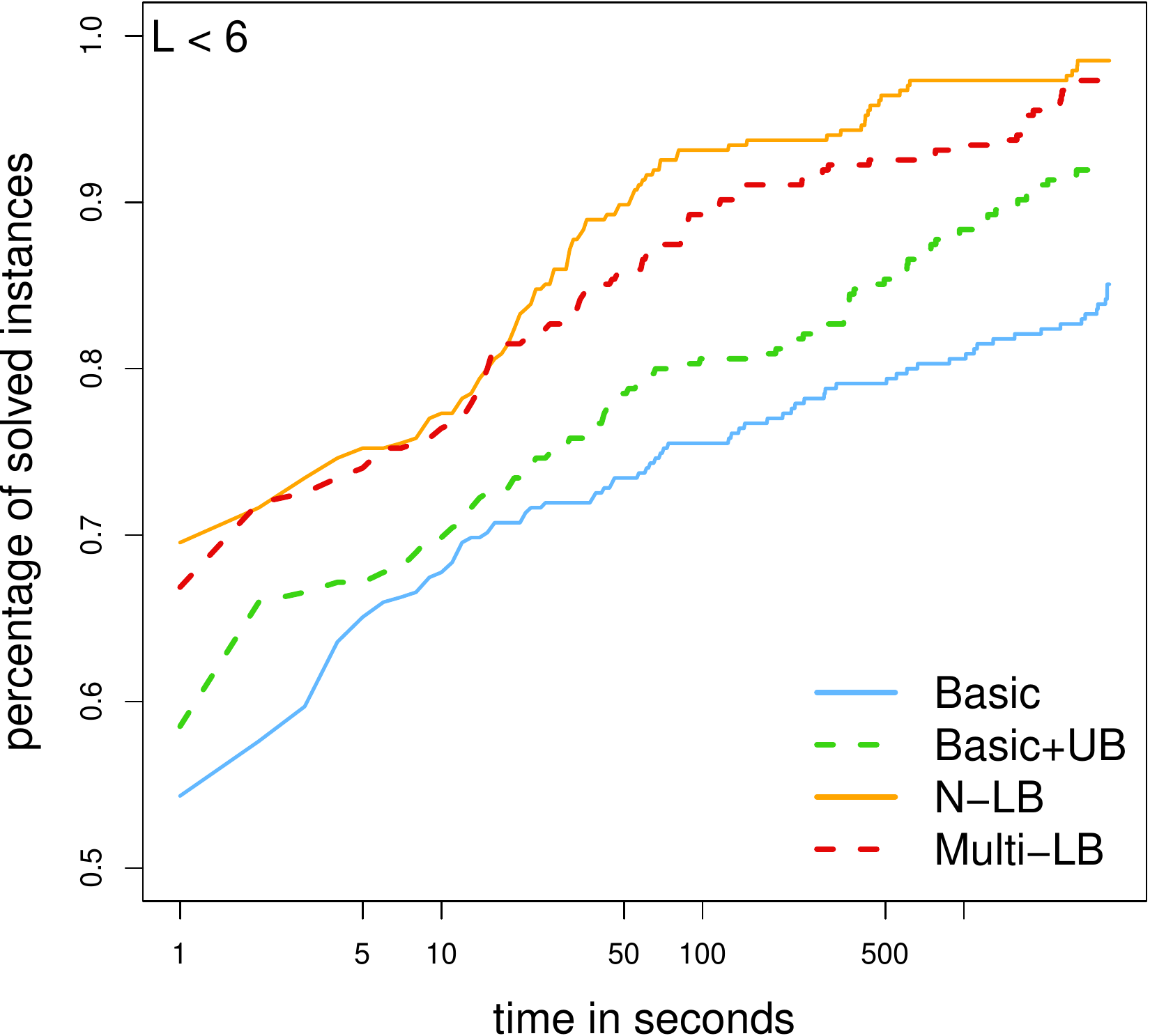}
\end{minipage}
\begin{minipage}[b]{0.45\textwidth}
\includegraphics[width=\textwidth]{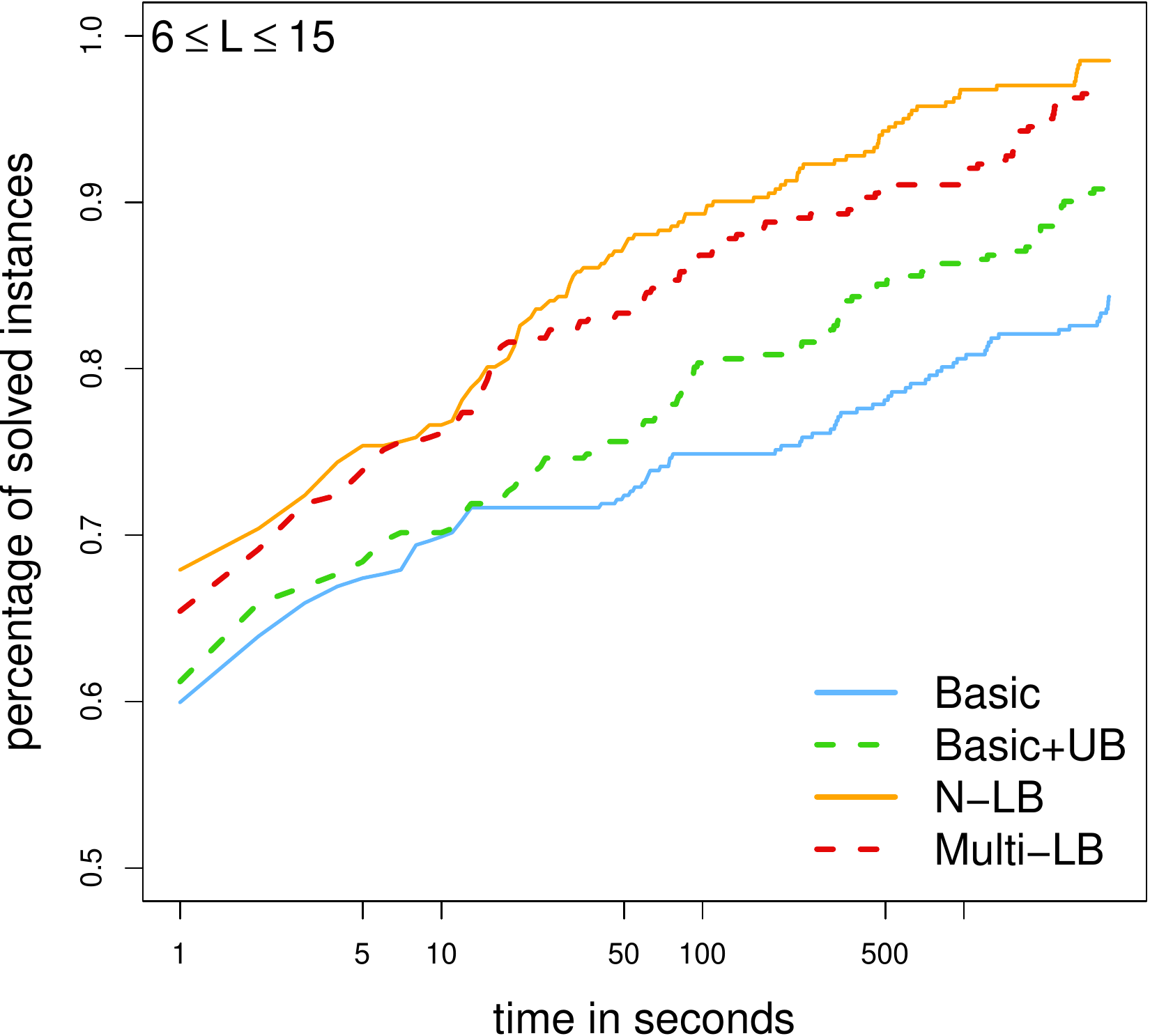}
\end{minipage}
\caption{Comparison of the four variants of our algorithm for \textsc{Vertex Triangle~$2$-Club} for~$\ell\le 15$.
%Comparison for \textsc{Vertex Triangle~$2$-Club} for~$\ell\le 15$.
}
\label{fig-results-vertex-variant}
\end{figure}

For each instance, we set a time-out of 1~hour.
The time needed to read the graph is not included in the running time.
Our source code, the list of all networks used in our experiments, and our result files are available at \url{https://www.uni-marburg.de/en/fb12/research-groups/algorith/t2c.zip}.\\
Almeida and Brás~\cite{AB19} provided four different ILP formulations for \textsc{Vertex Triangle~$2$-Club}.
They implemented their ILPs in CPLEX (\url{https://www.ibm.com/de-de/products/ilog-cplex-optimization-studio}).
We run their ILP variants with CPLEX 20.1.

\subsection{Results for the Variants of our Branching Algorithm}

%\begin{figure}[ht]
%\centering
%\begin{minipage}[b]{0.45\textwidth}
%\includegraphics[width=\textwidth]{evaluation/t2c_vertex_l_separated_l_1-5.pdf}
%\end{minipage}
%\begin{minipage}[b]{0.45\textwidth}
%\includegraphics[width=\textwidth]{evaluation/t2c_vertex_l_separated_l_6-15.pdf}
%\end{minipage}
%\caption{Comparison of the 4 variants of our algorithm for \textsc{Vertex Triangle~$2$-Club} for~$\ell\le 15$.}
%\label{fig-results-vertex-variant}
%\end{figure}

\begin{figure}[t!]
\centering
\begin{minipage}[b]{0.45\textwidth}
\includegraphics[width=\textwidth]{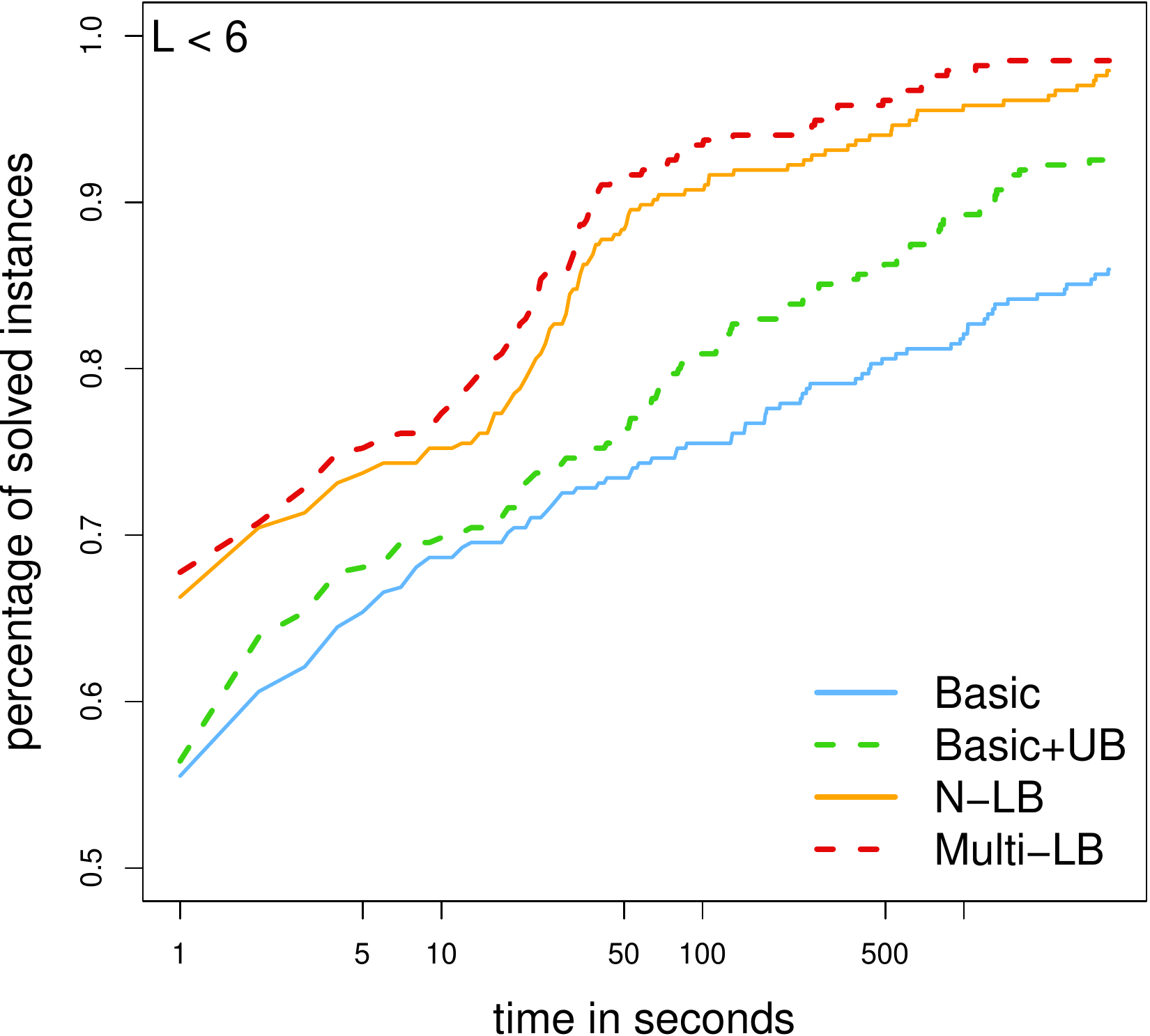}
\end{minipage}
\begin{minipage}[b]{0.45\textwidth}
\includegraphics[width=\textwidth]{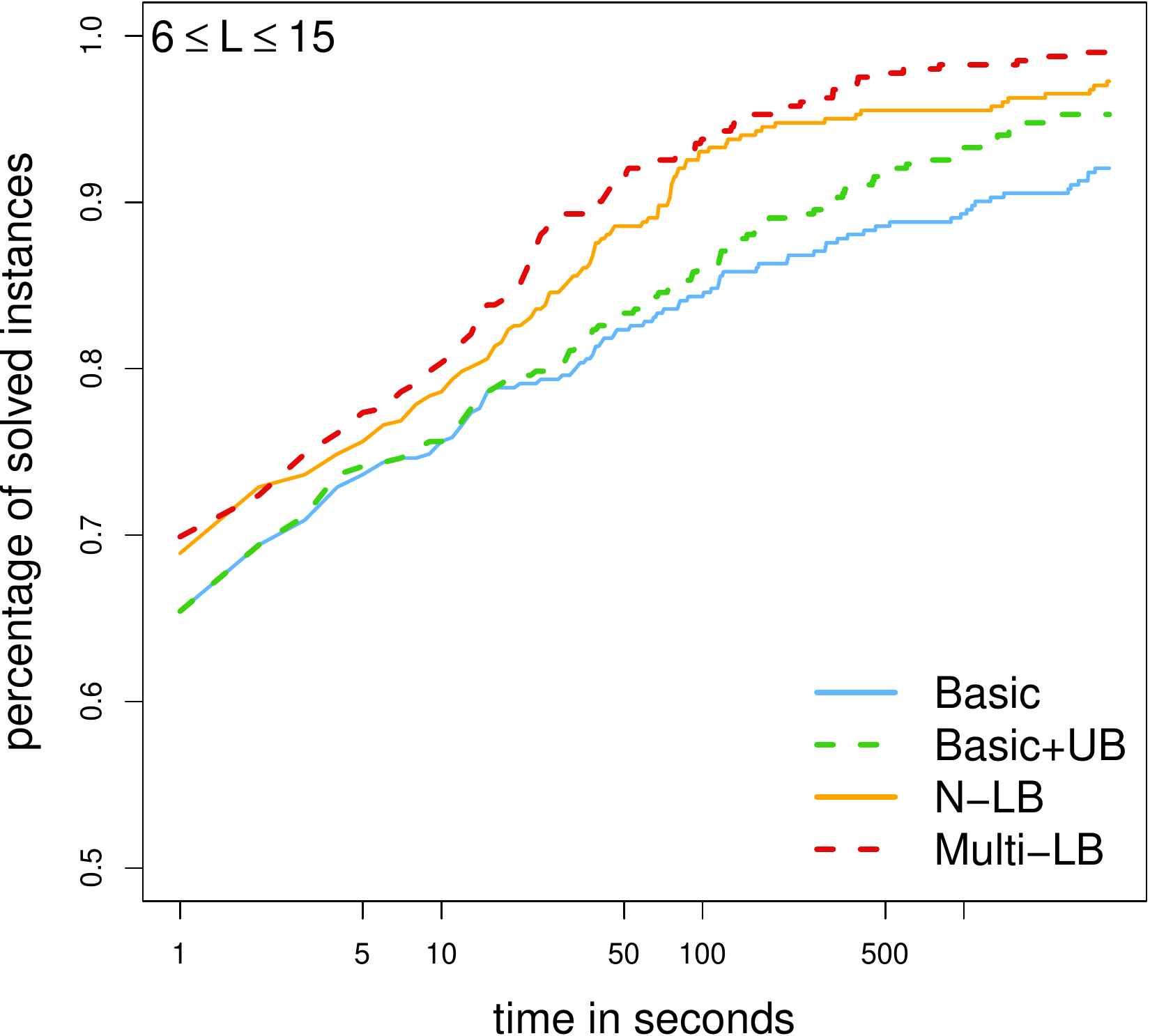}
\end{minipage}
\caption{Comparison of the four variants of our algorithm for \textsc{Edge Triangle~$2$-Club} for~$\ell\le 15$.
%Comparison for \textsc{Edge Triangle~$2$-Club} for~$\ell\le 15$.
}
\label{fig-results-edge-variant}
\end{figure}

\begin{figure}[t!]
\centering
\begin{minipage}[b]{0.45\textwidth}
\includegraphics[width=\textwidth]{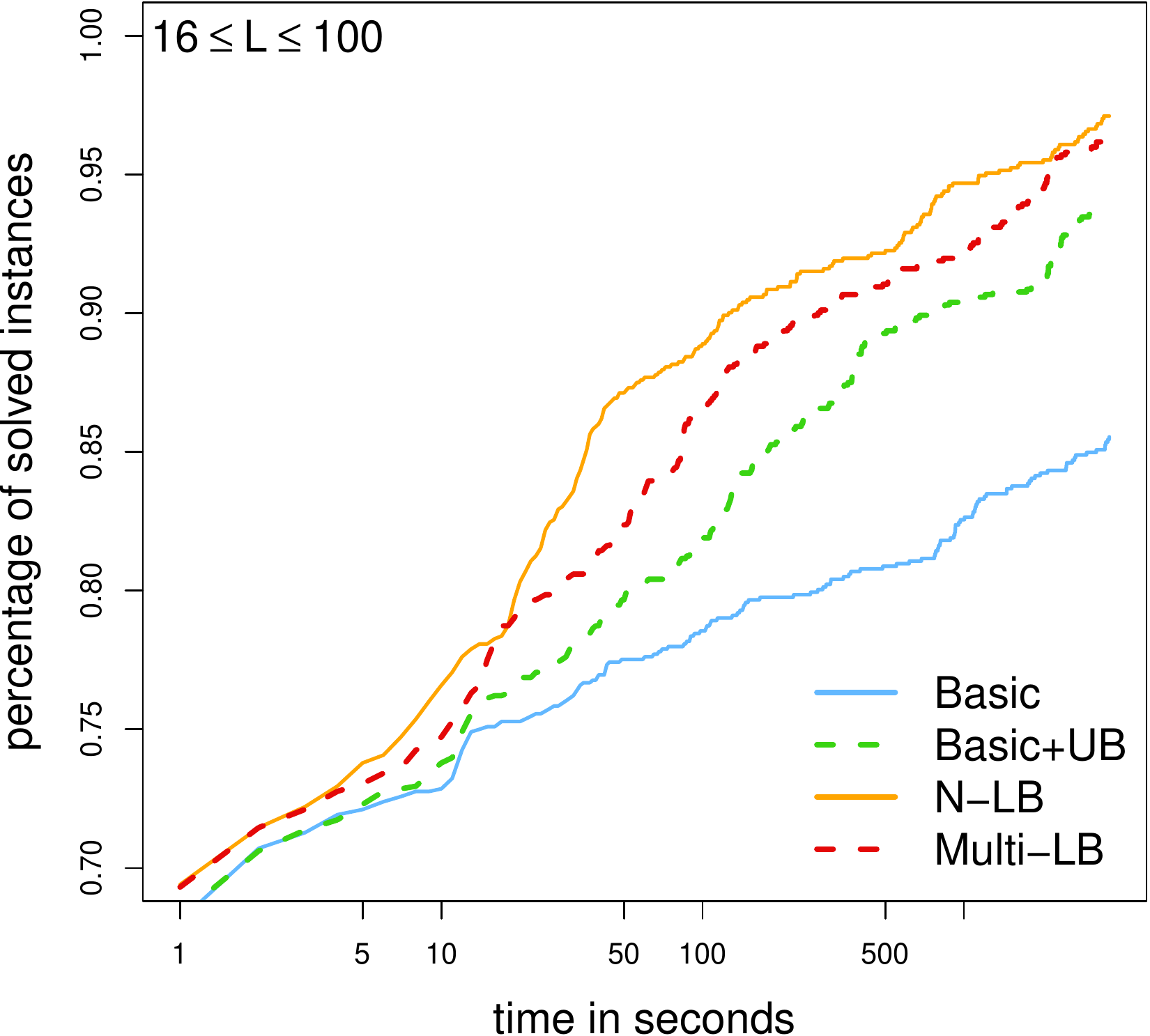}
\end{minipage}
\begin{minipage}[b]{0.45\textwidth}
\includegraphics[width=\textwidth]{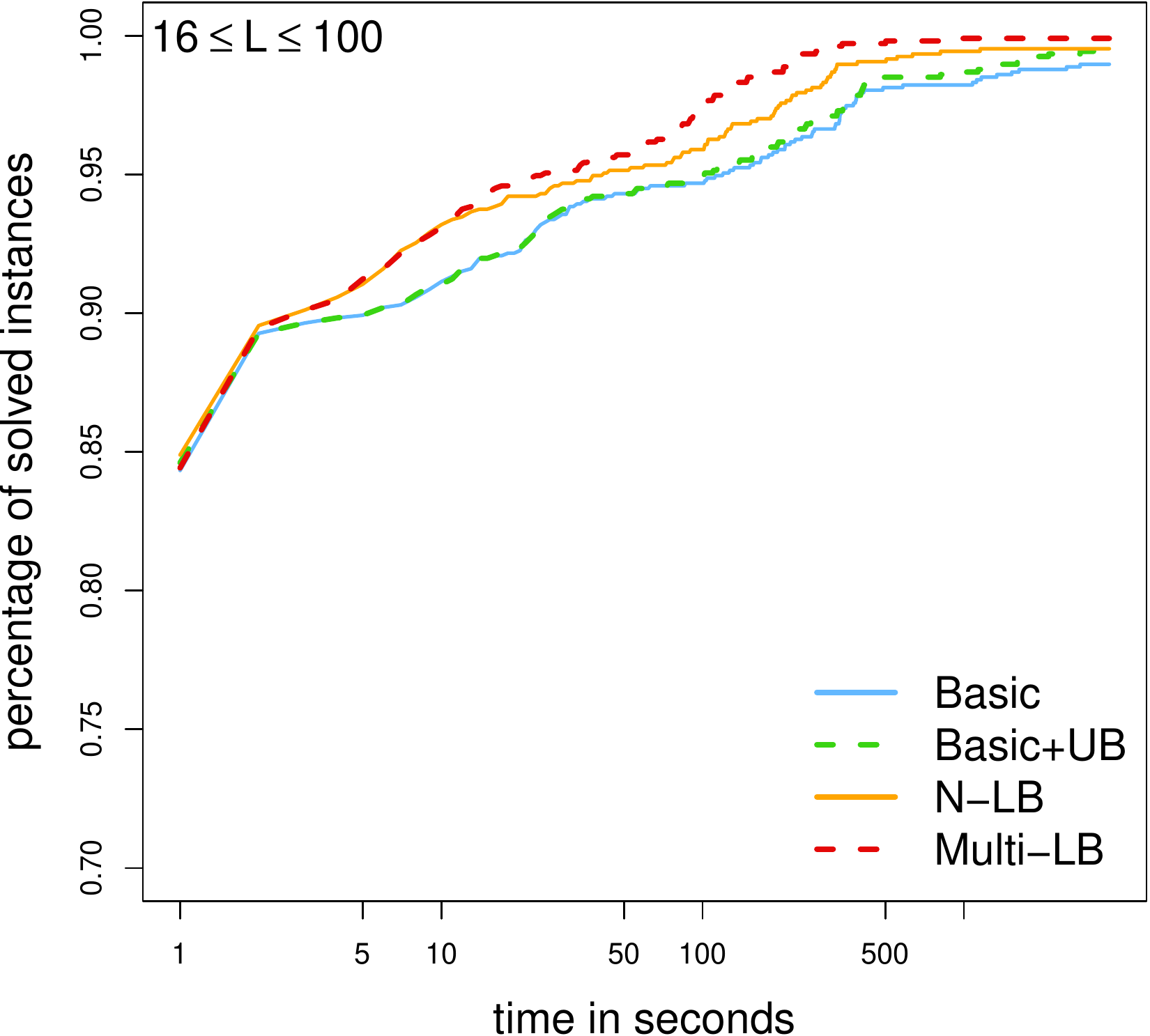}
\end{minipage}
\caption{Comparison of the four variants of our algorithm for~$\ell\ge 16$. 
The left plot shows our results for \textsc{Vertex Triangle~$2$-Club} and the right plot shows our results for \textsc{Edge Triangle~$2$-Club}.}
\label{fig-results-large-ell}
\end{figure}

The performance of the four variants of our algorithm for \textsc{Vertex Triangle~$2$-Club} and~$\ell\le 5$ is shown in the left part of Figure~\ref{fig-results-vertex-variant}.
\texttt{Basic} is substantially slower than \texttt{Basic+UB} which in turn is substantially slower than \texttt{Multi-LB}.
Furthermore, \texttt{N-LB} is even faster than \texttt{Multi-LB}.
The performance of the four variants of our algorithm for \textsc{Vertex Triangle~$2$-Club} and~$6\le \ell\le 15$ is shown in the right part of Figure~\ref{fig-results-vertex-variant}.
All four variants are slightly faster for~$6\le \ell\le 15$, where the LDR (Rule~\ref{rr-remove-vertices-to-low-degree}) and the LTR (Rule~\ref{rr-remove-vertices-to-low}) are applied more often in the initial data reduction which deletes substantially more vertices before the algorithm computes a lower bound and applies branching.

%\begin{figure}[ht]
%\centering
%\begin{minipage}[b]{0.45\textwidth}
%\includegraphics[width=\textwidth]{evaluation/t2c_edge_l_separated_l_1-5.pdf}
%\end{minipage}
%\begin{minipage}[b]{0.45\textwidth}
%\includegraphics[width=\textwidth]{evaluation/t2c_edge_l_separated_l_6-15.pdf}
%\end{minipage}
%\caption{Comparison of the 4 variants of our algorithm for \textsc{Edge Triangle~$2$-Club} for~$\ell\le 15$.}
%\label{fig-results-edge-variant}
%\end{figure}

The performance of the four variants of our algorithm for \textsc{Edge Triangle~$2$-Club} is shown in Figure~\ref{fig-results-edge-variant}.
The left part shows our results for~$\ell\le 5$ and the right part shows our results for~$6\le \ell\le 15$.
As in the vertex-variant, \texttt{Basic} is substantially slower than \texttt{Basic+UB} which in turn is substantially slower than \texttt{N-LB}.
In contrast to the vertex-variant, \texttt{Multi-LB} is faster than \texttt{N-LB}.
In the edge-variant, all variants are substantially faster for larger~$\ell$.

The performance of the four variants for~$\ell> 15$ of our algorithm for \textsc{Vertex Triangle~$2$-Club} and \textsc{Edge Triangle~$2$-Club} is shown in Figure~\ref{fig-results-large-ell}.
For \textsc{Vertex Triangle~$2$-Club} the results are similar to~$\ell\le 15$, that is, \texttt{Basic} is substantially slower than \texttt{Basic+UB} which in turn is substantially slower than \texttt{Multi-LB}, and \texttt{N-LB} is the fastest.
For \textsc{Edge Triangle~$2$-Club} the result is different: \texttt{Basic} is faster than \texttt{Basic+UB} which is slower than \texttt{N-LB} and \texttt{Multi-LB} is the fastest.
Again, all four variants are substantially faster for larger~$\ell>15$ than for~$\ell\le 15$ for both \textsc{Vertex Triangle~$2$-Club} and \textsc{Edge Triangle~$2$-Club}.
The main reason for this result is that the LDR and the LTR are applied more often in the initial data reduction which deletes a large portion of the vertices from the graph.
This is especially the case for \textsc{Edge Triangle~$2$-Club} since initially all vertices with degree at most~$\ell$ get deleted.

\begin{table}[t!]
\caption{Average quality of both lower bounds for both the edge and the vertex-variant for different values of~$\ell$.}
{\footnotesize
\begin{tabularx}{\textwidth}{p{1.5cm} X p{2cm} X X p{2cm} X}
  \toprule
\multirow{2}{*}{LB} & \multicolumn{3}{c}{Vertex-Variant} & \multicolumn{3}{c}{Edge-Variant} \\
\cmidrule(l{0em}r{1em}){2-4}
\cmidrule(l{0em}r{1em}){5-7}

& $\ell\le 5$ & $6\le\ell\le 15$ & $\ell\ge 16$ & $\ell\le 5$ & $6\le\ell\le 15$ & $\ell\ge 16$ \\
\midrule
N-LB   & $95.8\%$ & $93.7\%$ & $94.2\%$ & $96.8\%$ & $96.7\%$ & $97.2\%$ \\
Multi-LB & $99.9\%$ & $99.8\%$ & $99.3\%$ & $99.9\%$ & $99.9\%$ & $99.9\%$ \\
\bottomrule
\end{tabularx}
}\label{tab-lb-quality}
\end{table}

\begin{figure}[t!]
\centering
\begin{minipage}[b]{0.45\textwidth}
\includegraphics[width=\textwidth]{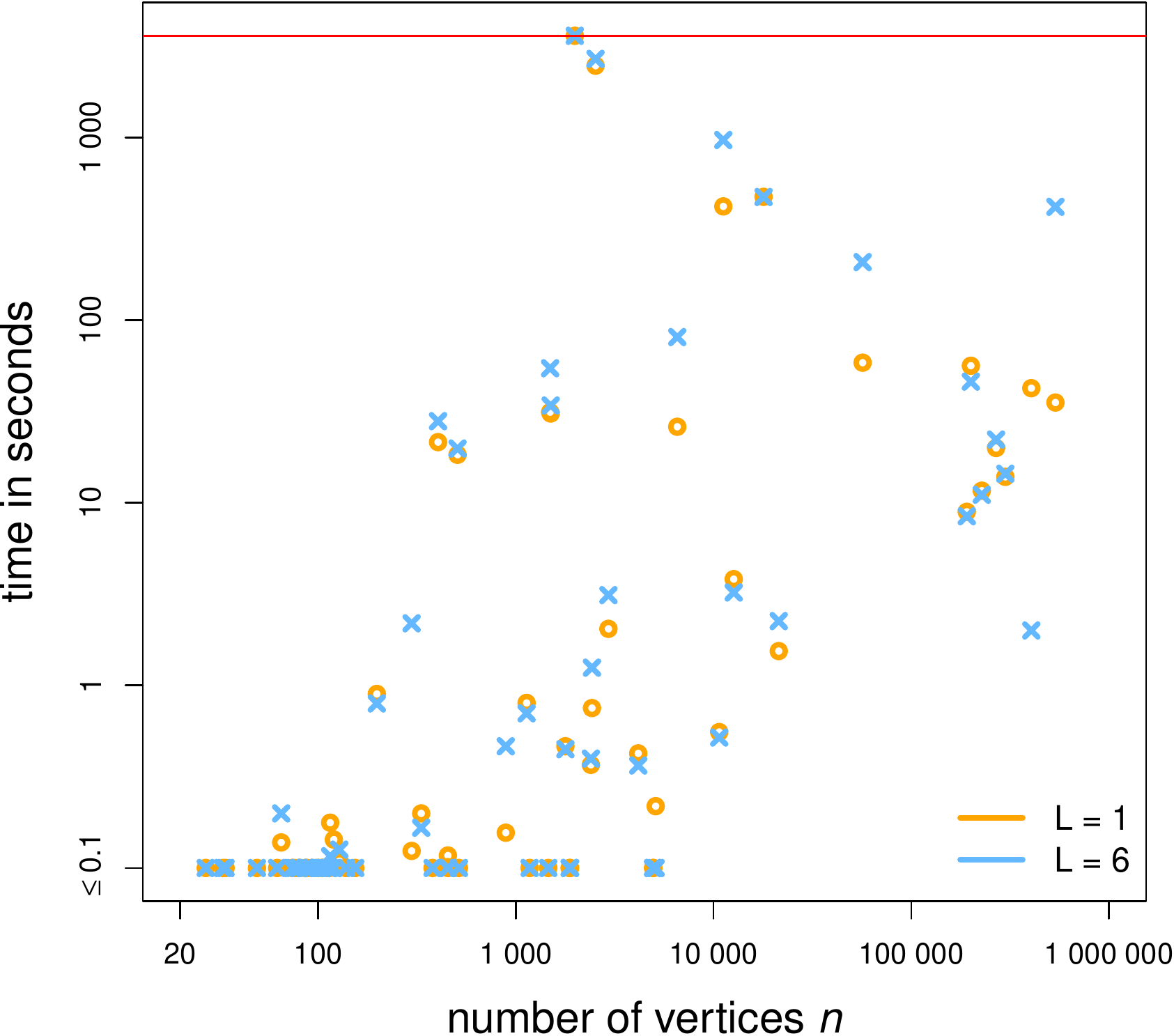}
\end{minipage}
\begin{minipage}[b]{0.45\textwidth}
\includegraphics[width=\textwidth]{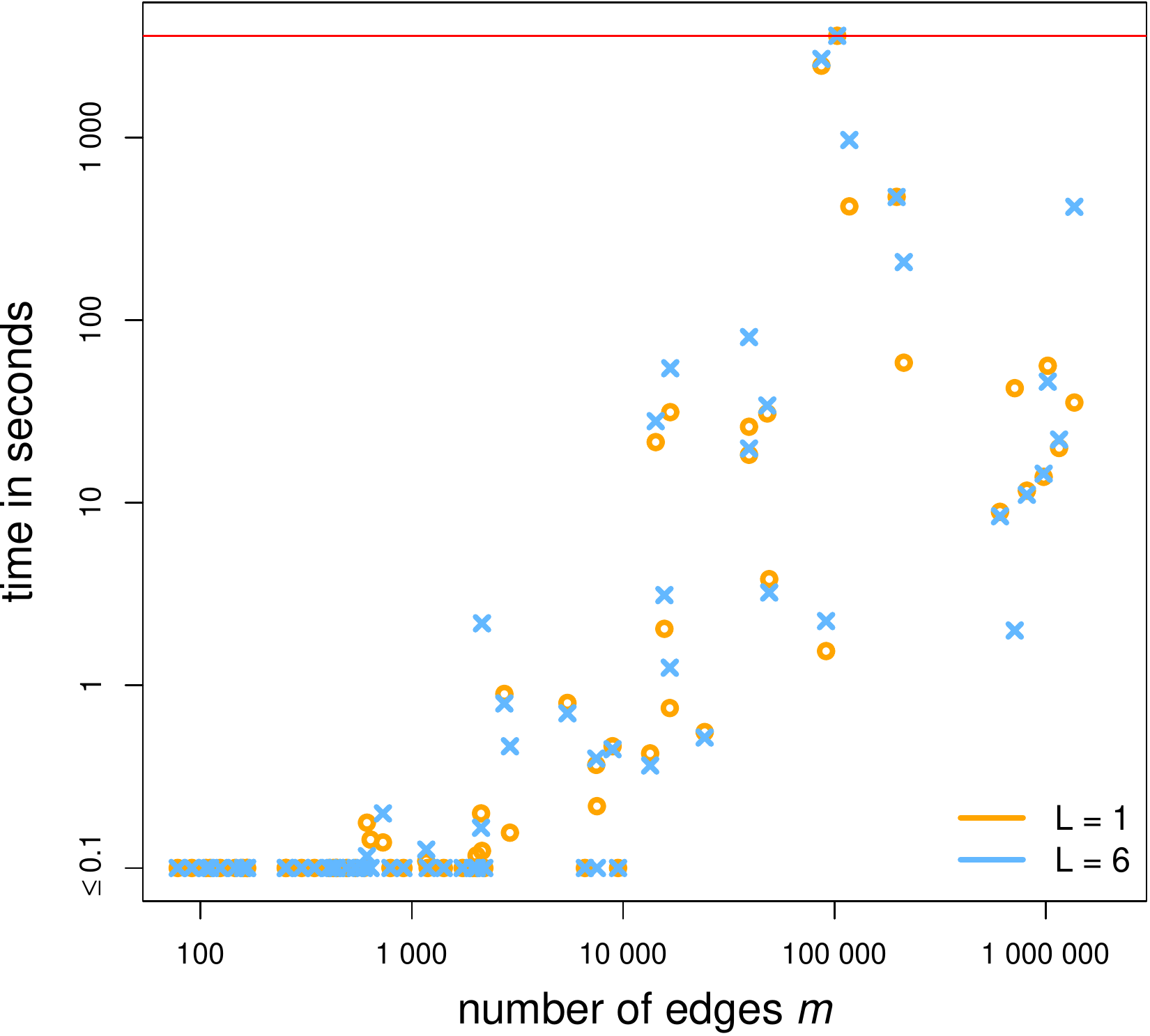}
\end{minipage}
\caption{Dependence of the running time of \texttt{N-LB} for \textsc{Vertex Triangle~$2$-Club} on the number of vertices (left) and the number of edges (right) of the input graph. Points on the red line correspond to instances which could not be solved within the time limit of 1h.}
\label{fig-results-dependece-ell-to-n-or-m}
\end{figure}

Interestingly, \texttt{Multi-LB} was only beneficial in terms of running time for the edge-variant; for the vertex-variant the running time increased compared with \texttt{N-LB}.  
However, such a  behavior cannot be observed if we compare the size of the lower bounds N-LB (used in \texttt{N-LB}) and Multi-LB (used in \texttt{Multi-LB} and denoted with Multi-LB) with the size of an optimal solution, see Table~\ref{tab-lb-quality}.
For both the edge and the vertex-variant, the Multi-LB is much better than the N-LB. % more than $80\%$ of the size of an optimal solution.
Surprisingly, the difference of the quality of these lower bounds is larger in the vertex-variant than in the edge-variant despite the fact the \texttt{Multi-LB} is only faster than the \texttt{N-LB} for the edge-variant.

The performance of our fastest variant \texttt{N-LB} for \textsc{Vertex Triangle~$2$-Club} in terms of the number of vertices and the number of edges of the input graph for the two specific values of~$\ell=1$ and~$\ell=6$ is shown in Figure~\ref{fig-results-dependece-ell-to-n-or-m}.
The only graph for both~$\ell=1$ and~$\ell=6$ which was not solved within 1~hour is AllActors.
With larger graph size the running time increases. 
There is no clear dependence of the running time with~$n$.
For example, two instances with less than $1\,000$~vertices require more time than some instances with more than $100\,000$~vertices.
In contrast, there is a clear dependence of the running time with the number of edges~$m$.

\begin{figure}[t!]
\centering
\begin{minipage}[b]{0.45\textwidth}
\includegraphics[width=\textwidth]{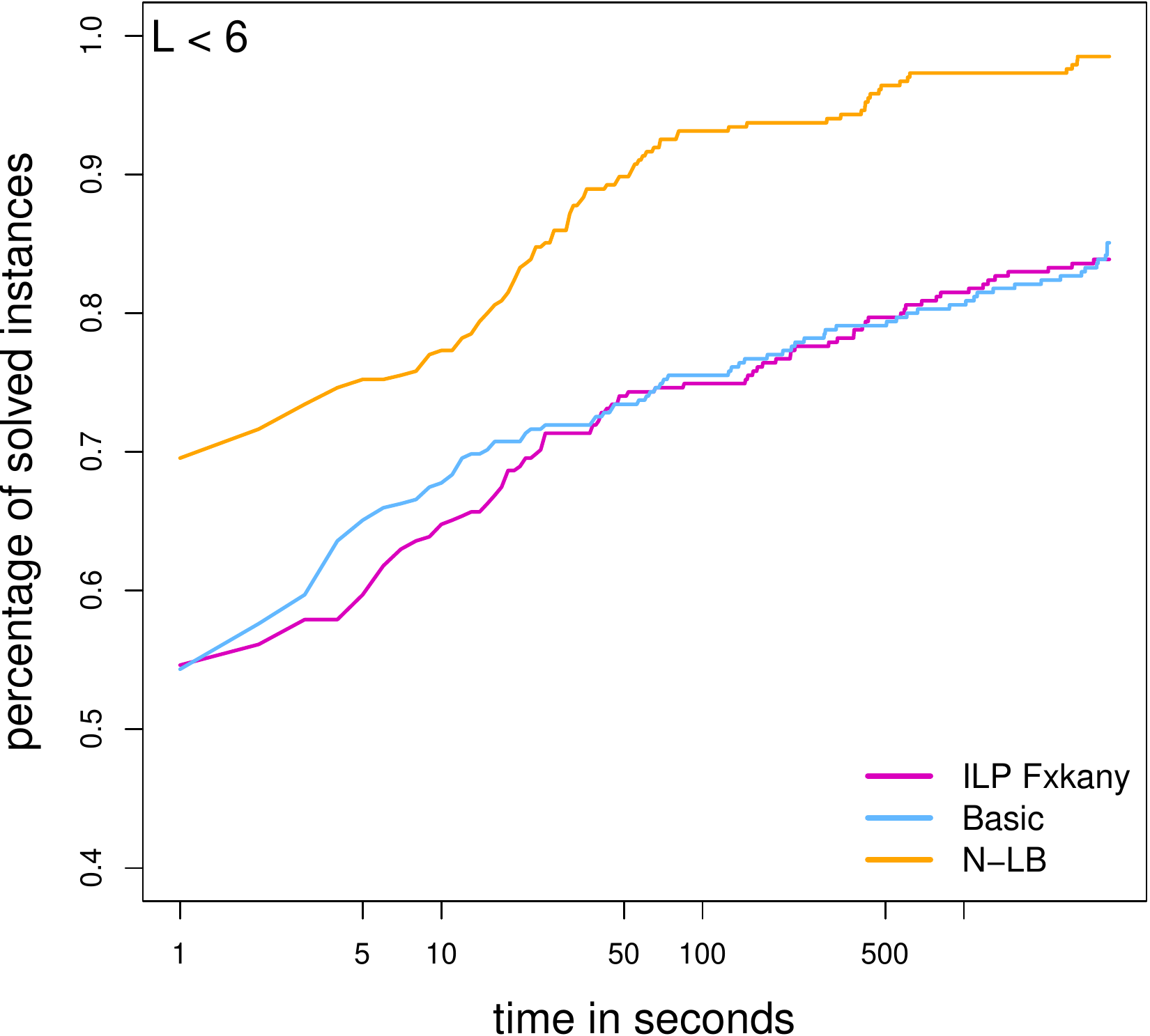}
\end{minipage}
\begin{minipage}[b]{0.45\textwidth}
\includegraphics[width=\textwidth]{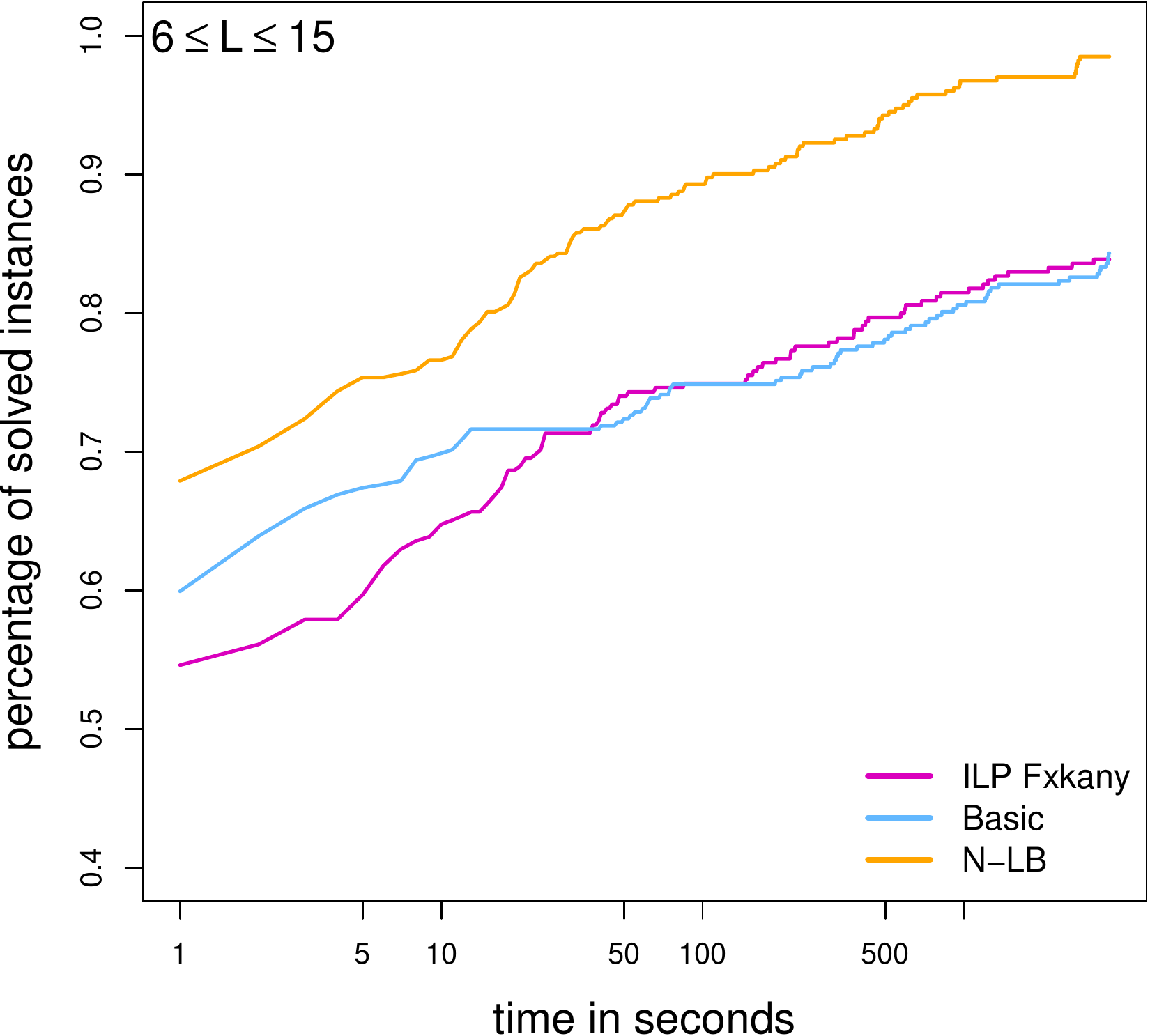}
\end{minipage}
\begin{minipage}[b]{0.45\textwidth}
\includegraphics[width=\textwidth]{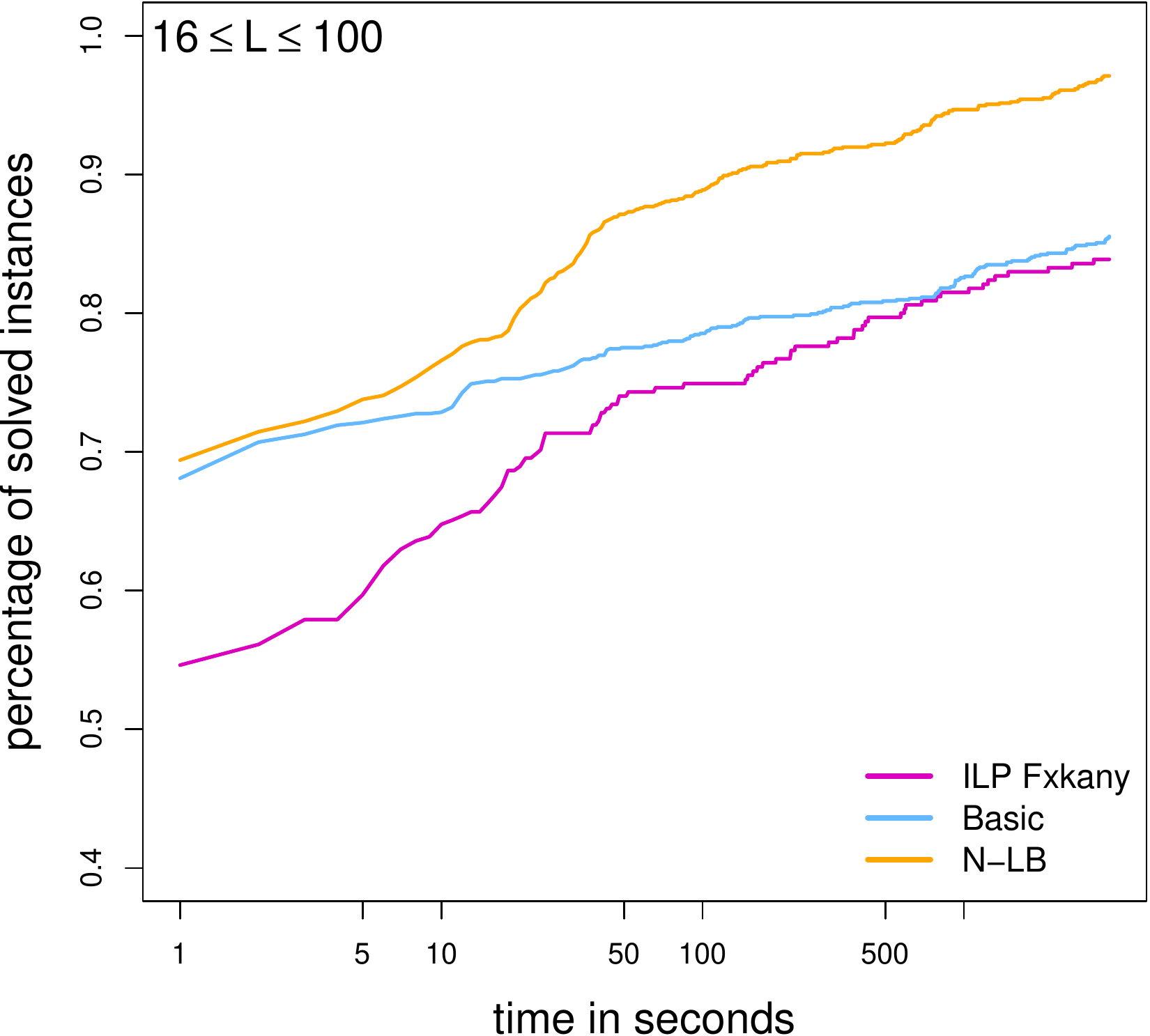}
\end{minipage}
\caption{Comparison of the 4 variants of our algorithm for \textsc{Vertex Triangle~$2$-Club} with the fastest ILP variant of Almeida and Br{\'{a}}s~\cite{AB19}.}
\label{fig-results-ilp}
\end{figure}

\subsection{Comparison with the ILP}

Almeida and Brás~\cite{AB19} provided four ILP variants for \textsc{Vertex Triangle~$2$-Club}.
In our experiments the variant \texttt{Fxkany} was the fastest of these four variants (see Figure~\ref{fig-results-ilp-variants} in the Appendix).
In Figure~\ref{fig-results-ilp} we compared our slowest variant \texttt{Basic} and our fastest variant \texttt{N-LB} with \texttt{Fxkany}.
For all~$\ell$, \texttt{Basic} is comparable with \texttt{Fxkany}, that is,
\texttt{Fxkany} and \texttt{Basic} solved roughly~$80\%$ of the instances in one hour. 
In comparison, our fastest variant \texttt{N-LB} solved~$80\%$ of the instances in less than one minute.
Thus, \texttt{N-LB} is more than 60~times faster than the previous best ILP for the vertex-variant.

\begin{table}[t!]
\caption{Average percentage of the total running time needed for preprocessing an instance for both the edge and the vertex-variant for different values of~$\ell$.
A value of NA indicates that \texttt{Fxkany} can only solve the vertex-variant. }
{\footnotesize
\begin{tabularx}{\textwidth}{p{1.5cm} X p{2cm} X X p{2cm} X}
  \toprule
\multirow{2}{*}{Algorithm} & \multicolumn{3}{c}{Vertex-Variant} & \multicolumn{3}{c}{Edge-Variant} \\
\cmidrule(l{0em}r{1em}){2-4}
\cmidrule(l{0em}r{1em}){5-7}

& $\ell\le 5$ & $6\le\ell\le 15$ & $\ell\ge 16$ & $\ell\le 5$ & $6\le\ell\le 15$ & $\ell\ge 16$ \\
\midrule
\texttt{N-LB}   & $73.0\%$ & $69.6\%$ & $77.6\%$ & $80.1\%$ & $87.4\%$ & $98.4\%$ \\
\texttt{Multi-LB} &$99.5\%$ & $98.4\%$ & $96.0\%$& $99.4\%$ & $99.0\%$ & $99.9\%$ \\
\texttt{Fxkany} & $15.0\%$ & $28.3\%$ & $59.0\%$ & NA& NA& NA \\
\bottomrule
\end{tabularx}
}\label{tab-preproc-times}
\end{table}

\begin{figure}[t!]
\centering
\begin{minipage}[b]{0.45\textwidth}
\includegraphics[width=\textwidth]{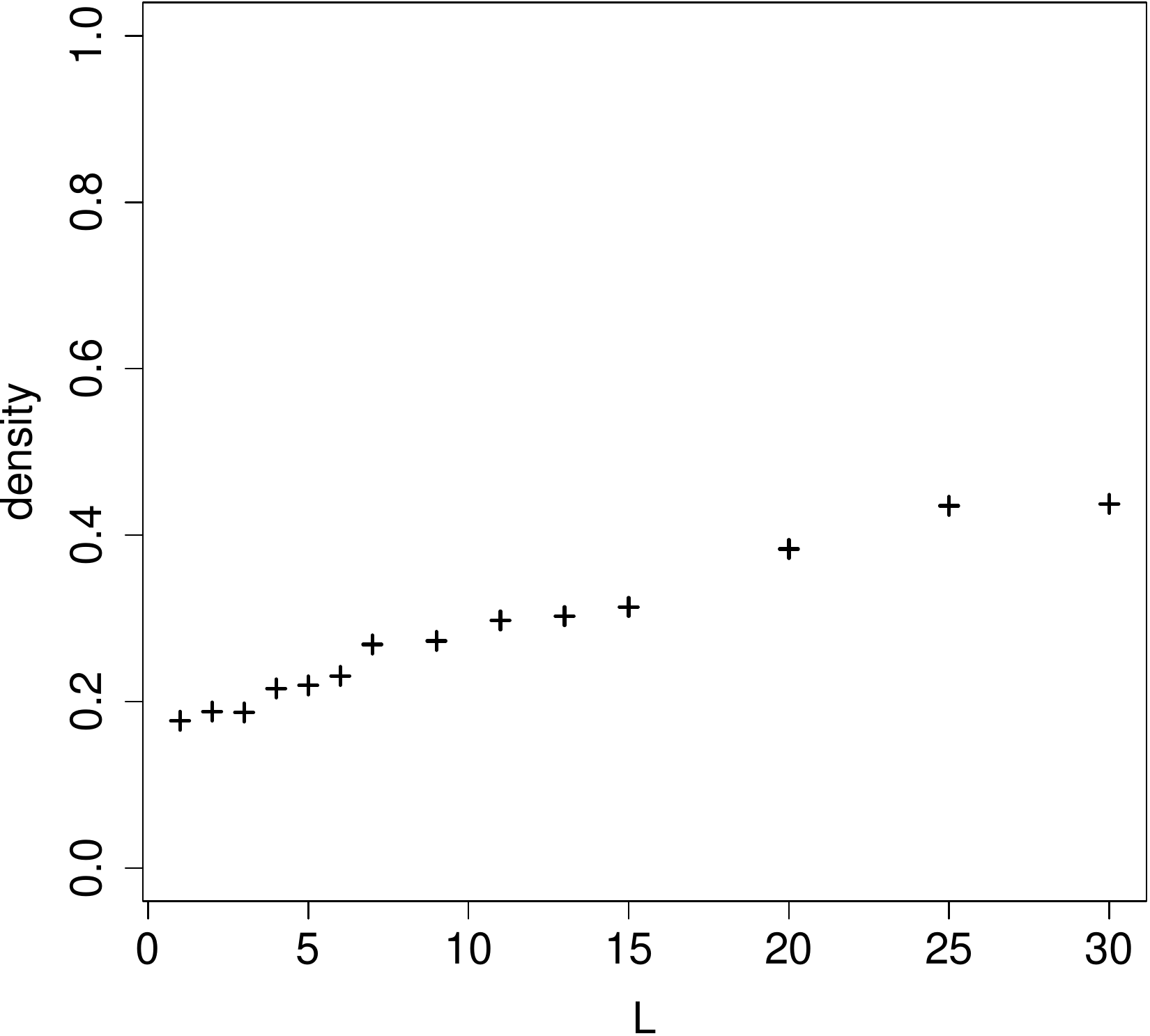}
\end{minipage}
\begin{minipage}[b]{0.45\textwidth}
\includegraphics[width=\textwidth]{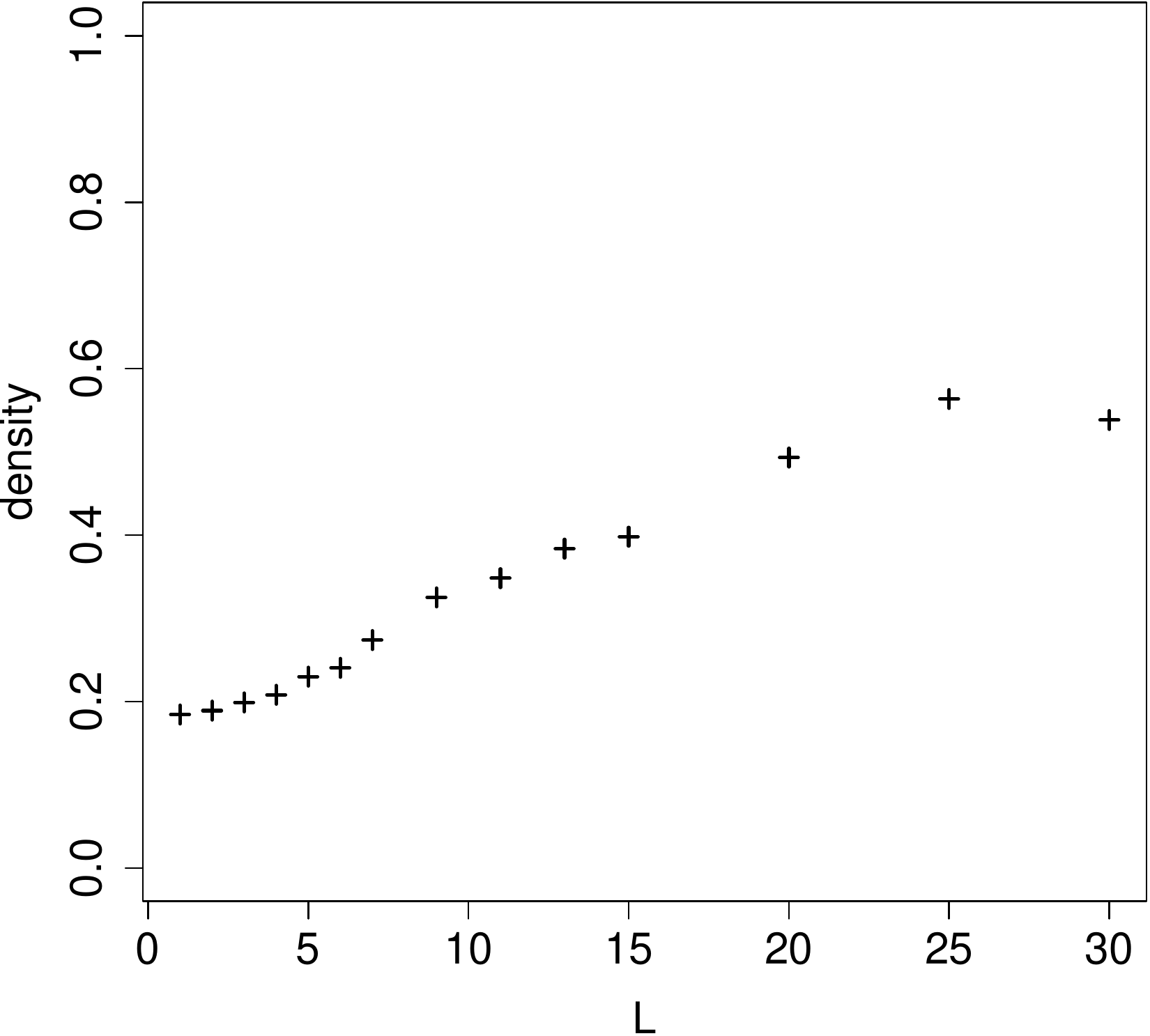}
\end{minipage}
\caption{Dependence of the density on~$\ell$ (L). 
The left plot shows the result for \textsc{Vertex Triangle~$2$-Club} and the right plot shows the result for \textsc{Edge Triangle~$2$-Club}.}
\label{fig-results-density}
\end{figure}

We also investigated how much time the algorithms spend for the preprocessing (see Table~\ref{tab-preproc-times}). 
For our algorithm the preprocessing (see also Figure~\ref{fig-algo-seqeunce}) comprises the initial application of the Basic Rules, the N-LB and the subsequent application of the 2-NR and the Basic Rules (\texttt{N-LB} and \texttt{Multi-LB}), and the G-LB and the subsequent application of the Basic Rules (\texttt{Multi-LB}).
For \texttt{Fxkany}, the preprocessing comprises the application of the LTR.
For our algorithm the average percentage of the total running time needed for preprocessing is at least~$2/3$ of the total running time.
In contrast, the average percentage of the total running time needed for preprocessing for \texttt{Fxkany} is very low for~$\ell\le 5$ but almost~$60\%$ for~$\ell\ge 15$.
One reason for this disparity is that our preprocessing comprises more rules and that our algorithm spends very little time for the branching.
Also, for our algorithm we did not observe big differences in the behaviour for the vertex and the edge-variant.

\subsection{Properties of the Solution}

\begin{figure}[t!]
\centering
\begin{minipage}[b]{0.45\textwidth}
\includegraphics[width=\textwidth]{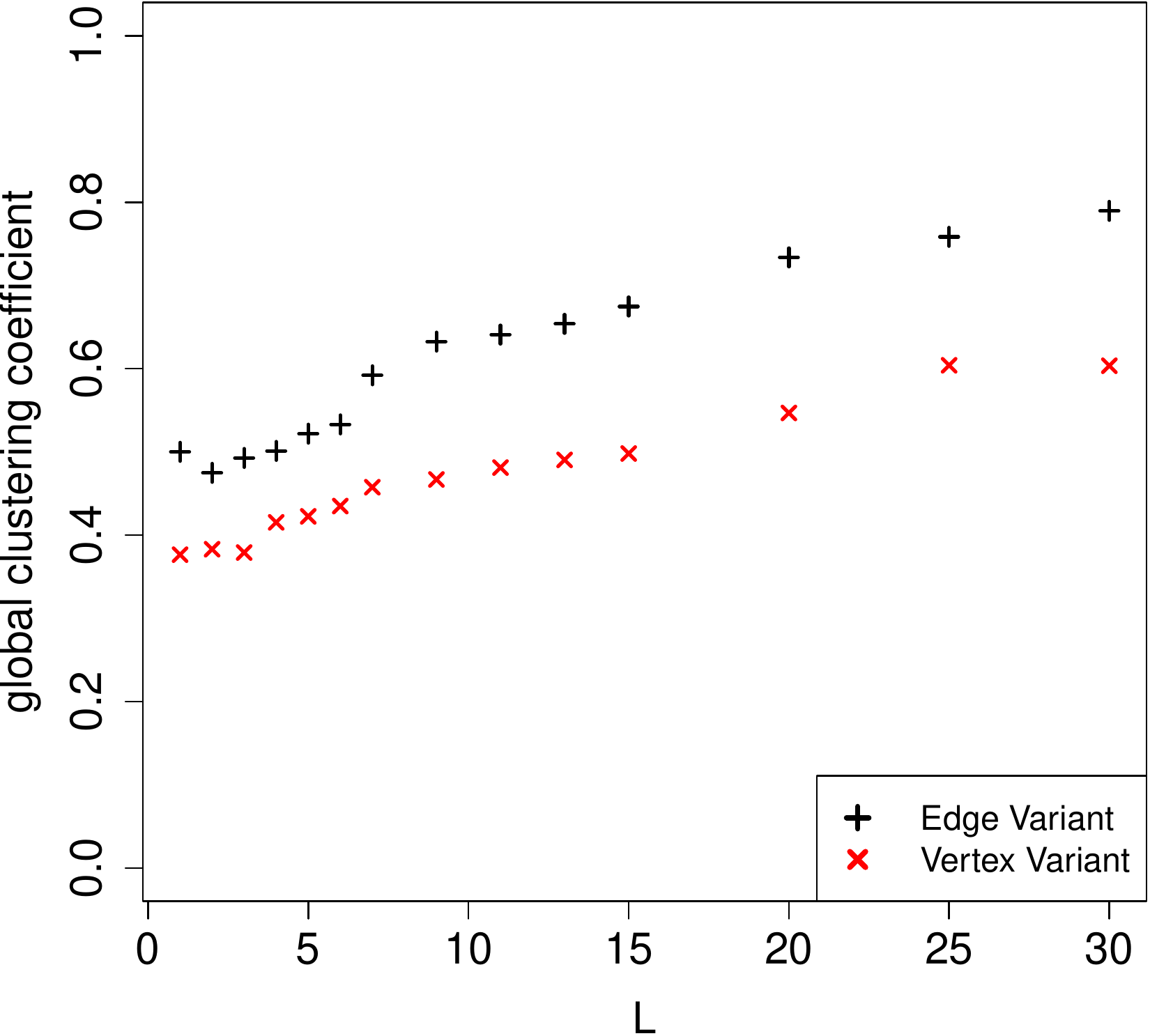}
\end{minipage}
\begin{minipage}[b]{0.45\textwidth}
\includegraphics[width=\textwidth]{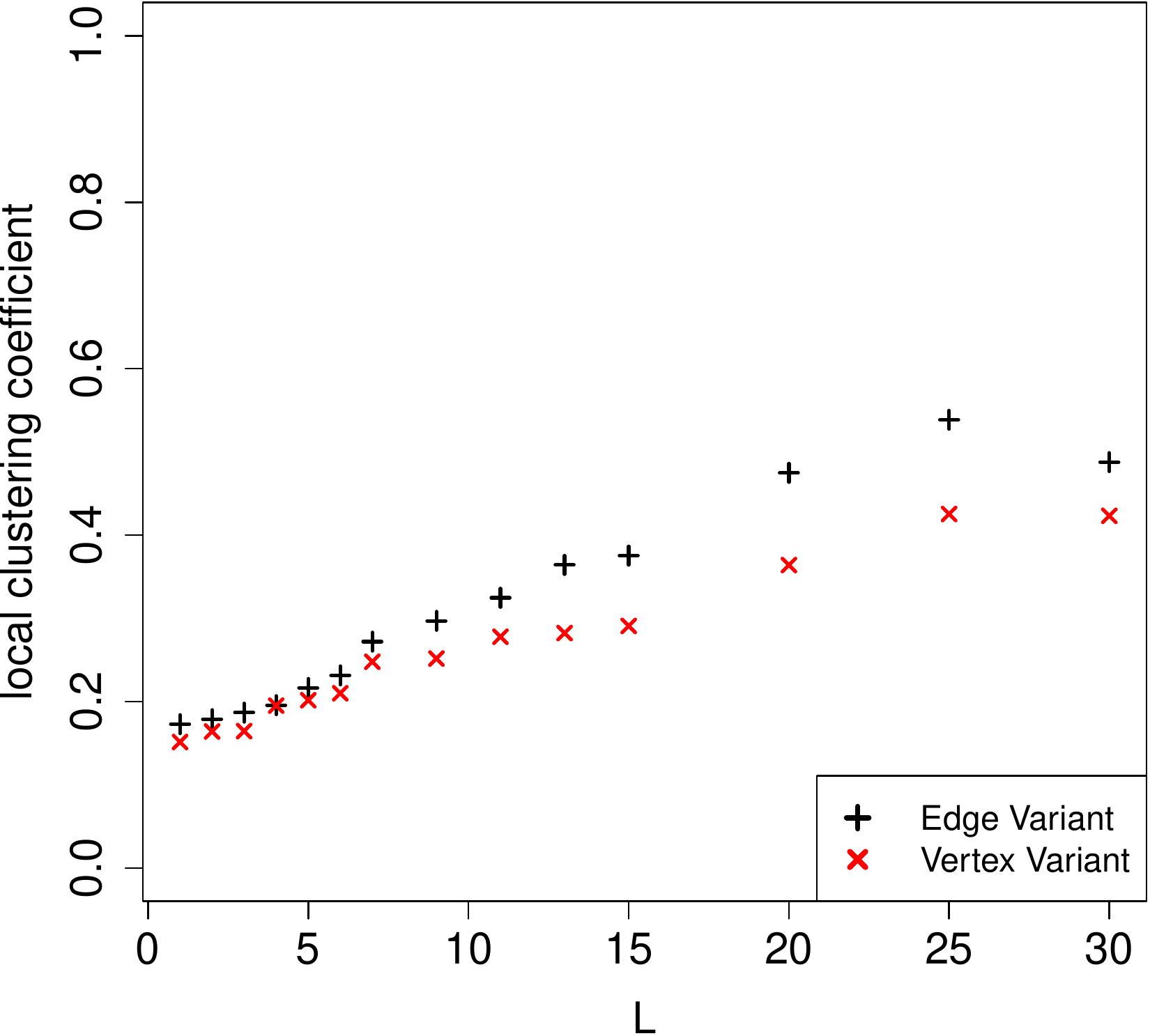}
\end{minipage}
\caption{Dependence of the average global and the smallest local clustering coefficients on~$\ell$ (L). Black +-signs show the values for the edge-variant, red crosses show the values for the vertex-variant.}
\label{fig-results-cluster-coeff-edge}
\end{figure}

We also considered the impact of~$\ell$ on the density. 
Figure~\ref{fig-results-density} shows the average densities of the solutions for all instances that were solved within the time limit and had non-empty solutions for all~$\ell\le 30$. 
In general, the density grows with~$\ell$. 
Already for~$\ell=1$, the density is relatively high.
One can see that the density grows more strongly with increasing~$\ell$ in the edge-variant than in the vertex-variant.
The main reason for this is that the minimum degree in an edge-$\ell$-triangle~$2$-club is higher than the minimum degree in a vertex-$\ell$-triangle~$2$-club, especially for~$\ell\ge 10$.

Finally, we consider the impact of~$\ell$ on global and smallest local clustering coefficients for the solution. 
Figure~\ref{fig-results-cluster-coeff-edge} shows the average coefficients for all instances that were solved within the time limit and had non-empty solutions for all~$\ell\le 30$. 
In general, both coefficients grow with~$\ell$. 
As expected, the growth is more rapid for the stricter edge-variant where the minimum degree is~$\ell+1$. 
Observe that already for~$\ell=1$, the global clustering coefficient is relatively high for both variants. 
Naturally, the smallest local clustering is comparably smaller but achieves high values already for~$\ell\le 10$ in both variants. 
Summarizing, this shows that the computed solutions fulfill further desirable community properties.

\section{Conclusion}

We provided an exact branch-and-bound solver for \textsc{Vertex Triangle~$2$-Club} and for \textsc{Edge Triangle~$2$-Club}.
We showed that our solver outperforms existing ILPs by Almeida and Brás~\cite{AB19} for \textsc{Vertex Triangle~$2$-Club}.
Furthermore, we showed that the local and global clustering coefficient in a triangle~$2$-club in real-world instances is much higher than the guaranteed local or global clustering coefficient by the definition of this model (see~\cite{AB19}).
Also, our experiments showed that both the local and global clustering coefficient is higher in the edge-variant than the corresponding values in the vertex-variant.
Since also edge-$\ell$-triangle~$2$-clubs can be found faster than vertex-$\ell$-triangle~$2$-clubs (which is mainly because of the higher minimum degree in the edge-variant), we conclude that the edge-$\ell$-triangle~$2$-club model is not only preferable from a modeling point of view~\cite{GKS22}, but also regarding the running time and quality of the solution.

It would be interesting to also develop ILP formulations for \textsc{Edge Triangle~$2$-Club} and compare them with our algorithm.
For this problem, we expect ILP-based formulations to be slower than for \textsc{Vertex Triangle~$2$-Club}:  direct formulations need variables that encode the presence of an edge which is not necessary for the vertex-triangle property.
It seems very promising to tune such an ILP with the LDR (Rule~\ref{rr-remove-vertices-to-low-degree}) and the LTR (Rule~\ref{rr-remove-vertices-to-low}) as an initial pre-processing as it was done by Almeida and Brás for \textsc{Vertex Triangle~$2$-Club} (they only used the LTR).
%Furthermore, it is interesting to see whether providing the ILP for \textsc{Edge Triangle~$2$-Club} with the lower bound of the N-LB decreases the running time.
%The same test could then be done for the Multi-LB for the ILPs for \textsc{Vertex Triangle~$2$-Club} and \textsc{Edge Triangle~$2$-Club}. 

Finally, it would be interesting to lift our implementation to be able to also solve \textsc{Vertex Triangle~$s$-Club} and \textsc{Edge Triangle~$s$-Club} for~$s\ge 3$.
Some reduction rules like the LTR or the LDR can be used directly for these problems.
To use the LCR, one simply needs to adapt the definition of incompatibility.
But not every reduction rule can be adapted that easily:
For example the NCR (Rule~\ref{rr-no-choice2}) is not true anymore since two non-adjacent marked vertices~$u$ and~$w$ do not need to have a common neighbor anymore since~$s\ge 3$.
Another challenge is to develop good heuristics: 
For~$s\ge 3$, a lower bound based only on the neighborhood of a vertex might have a too large difference to the value of an optimal solution.
Finally,  one bottleneck in our solver for~$s=2$ is the check whether a vertex set is a $2$-club.
For~$s\ge 3$ this check needs even more time since it is not sufficient anymore to check whether each pair of vertices is adjacent or has a common neighbor.
Hence, it is important to check the $s$-club property sufficiently fast.

\bibliographystyle{plainurl}

\begin{thebibliography}{10}

\bibitem{AB19}
Maria~Teresa Almeida and Raul Br{\'{a}}s.
\newblock The maximum \emph{l}-triangle \emph{k}-club problem: Complexity,
  properties, and algorithms.
\newblock {\em Computers \& Operations Research}, 111:258--270, 2019.

\bibitem{BKM+14}
David~A. Bader, Andrea Kappes, Henning Meyerhenke, Peter Sanders, Christian
  Schulz, and Dorothea Wagner.
\newblock Benchmarking for graph clustering and partitioning.
\newblock In {\em Encyclopedia of Social Network Analysis and Mining}, pages
  73--82. Springer, 2014.

\bibitem{BBT05}
Balabhaskar Balasundaram, Sergiy Butenko, and Svyatoslav Trukhanov.
\newblock Novel approaches for analyzing biological networks.
\newblock {\em Journal of Combinatorial Optimization}, 10(1):23--39, 2005.

\bibitem{BLP02}
Jean{-}Marie Bourjolly, Gilbert Laporte, and Gilles Pesant.
\newblock An exact algorithm for the maximum $k$-club problem in an undirected
  graph.
\newblock {\em European Journal of Operational Research}, 138(1):21--28, 2002.

\bibitem{CA17}
Filipa~D. Carvalho and Maria~Teresa Almeida.
\newblock The triangle $k$-club problem.
\newblock {\em Journal of Combinatorial Optimization}, 33(3):814--846, 2017.

\bibitem{CHLS13}
Maw{-}Shang Chang, Ling{-}Ju Hung, Chih{-}Ren Lin, and Ping{-}Chen Su.
\newblock Finding large $k$-clubs in undirected graphs.
\newblock {\em Computing}, 95(9):739--758, 2013.

\bibitem{CN85}
Norishige Chiba and Takao Nishizeki.
\newblock Arboricity and subgraph listing algorithms.
\newblock {\em {SIAM} Journal on Computing}, 14(1):210--223, 1985.

\bibitem{GKS22}
Jaroslav Garvardt, Christian Komusiewicz, and Frank Sommer.
\newblock The parameterized complexity of $s$-club with triangle and seed
  constraints.
\newblock In {\em Proceedings of the 33rd International Workshop on
  Combinatorial Algorithms, ({IWOCA}~'22)}, volume 13270 of {\em Lecture Notes
  in Computer Science}, pages 313--326. Springer, 2022.

\bibitem{HKN15}
Sepp Hartung, Christian Komusiewicz, and Andr{\'{e}} Nichterlein.
\newblock Parameterized algorithmics and computational experiments for finding
  2-clubs.
\newblock {\em Journal of Graph Algorithms and Applications}, 19(1):155--190,
  2015.

\bibitem{Ke21}
Philipp~Heinrich Ke{\ss}ler.
\newblock Algorithm engineering for the triangle-$2$-club problem.
\newblock Bachelor's thesis, Philipps-Universität Marburg, 2020.
\newblock URL:
  \url{https://www.uni-marburg.de/de/fb12/arbeitsgruppen/algorith/paper/philipp\_kessler\_bachelor-thesis.pdf}.

\bibitem{KNNP19}
Christian Komusiewicz, Andr{\'{e}} Nichterlein, Rolf Niedermeier, and Marten
  Picker.
\newblock Exact algorithms for finding well-connected 2-clubs in sparse
  real-world graphs: Theory and experiments.
\newblock {\em European Journal of Operational Research}, 275(3):846--864,
  2019.

\bibitem{Kun13}
Jérôme Kunegis.
\newblock {KONECT:} the {K}oblenz network collection.
\newblock In {\em Proceedings of the 22nd International World Wide Web
  Conference ({WWW}~'13)}, pages 1343--1350. International World Wide Web
  Conferences Steering Committee / {ACM}, 2013.

\bibitem{Mok79}
Robert~J Mokken.
\newblock Cliques, clubs and clans.
\newblock {\em Quality \& Quantity}, 13(2):161--173, 1979.

\bibitem{PBH16}
Foad~Mahdavi Pajouh, Balabhaskar Balasundaram, and Illya~V. Hicks.
\newblock On the 2-club polytope of graphs.
\newblock {\em Operations Research}, 64(6):1466--1481, 2016.

\bibitem{PYB13}
Jeffrey Pattillo, Nataly Youssef, and Sergiy Butenko.
\newblock On clique relaxation models in network analysis.
\newblock {\em European Journal of Operational Research}, 226(1):9--18, 2013.

\bibitem{nr}
Ryan~A. Rossi and Nesreen~K. Ahmed.
\newblock The network data repository with interactive graph analytics and
  visualization.
\newblock In {\em Proceedings of the 29th AAAI Conference on Artificial
  Intelligence ({AAAI}~'15)}, pages 4292--4293. {AAAI} Press, 2015.
\newblock URL: \url{http://networkrepository.com}.

\bibitem{SB20}
Hosseinali Salemi and Austin Buchanan.
\newblock Parsimonious formulations for low-diameter clusters.
\newblock {\em Mathematical Programming Computation}, 12(3):493--528, 2020.

\bibitem{SKMN12}
Alexander Sch{\"{a}}fer, Christian Komusiewicz, Hannes Moser, and Rolf
  Niedermeier.
\newblock Parameterized computational complexity of finding small-diameter
  subgraphs.
\newblock {\em Optimization Letters}, 6(5):883--891, 2012.

\bibitem{VB12}
Alexander Veremyev and Vladimir Boginski.
\newblock Identifying large robust network clusters via new compact
  formulations of maximum $k$-club problems.
\newblock {\em European Journal of Operational Research}, 218(2):316--326,
  2012.

\bibitem{YPB17}
Oleksandra Yezerska, Foad~Mahdavi Pajouh, and Sergiy Butenko.
\newblock On biconnected and fragile subgraphs of low diameter.
\newblock {\em European Journal of Operational Research}, 263(2):390--400,
  2017.

\end{thebibliography}

\newpage
\appendix

\section{Comparison of the ILP Variants for Vertex Triangle~2-Club}

\begin{figure}[h!]
\centering
\begin{minipage}[b]{0.45\textwidth}
\includegraphics[width=\textwidth]{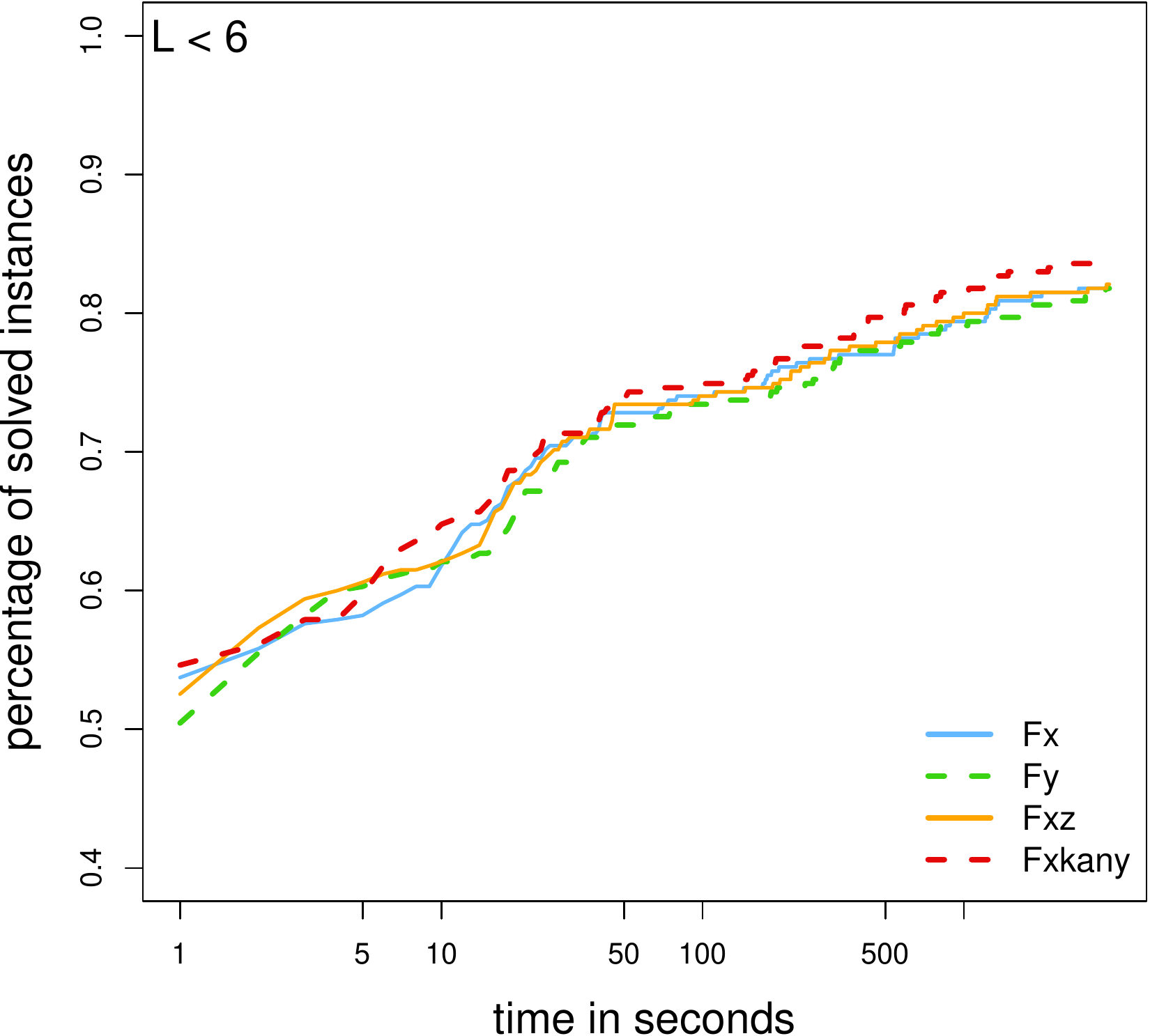}
\end{minipage}
\begin{minipage}[b]{0.45\textwidth}
\includegraphics[width=\textwidth]{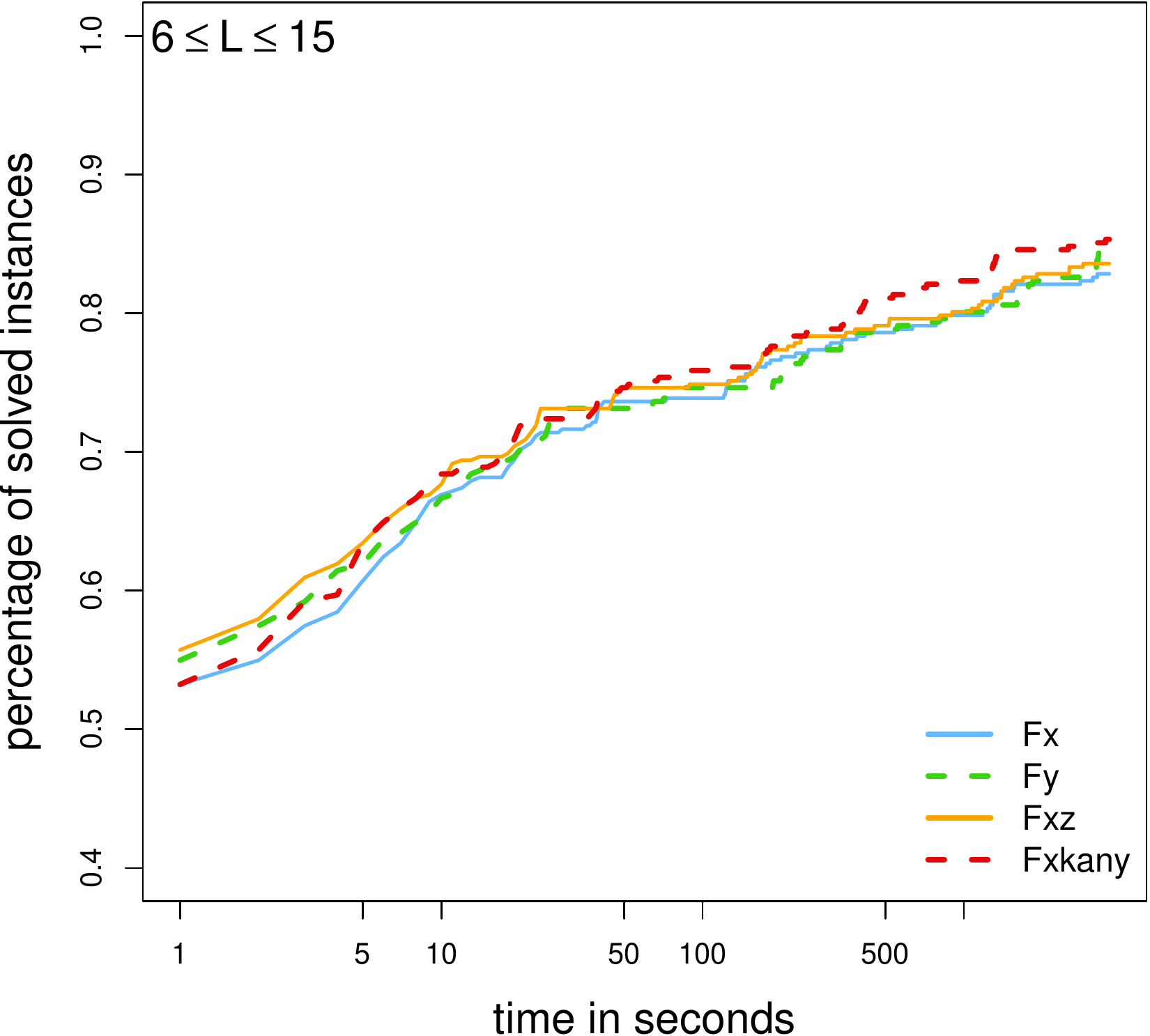}
\end{minipage}
\begin{minipage}[b]{0.45\textwidth}
\includegraphics[width=\textwidth]{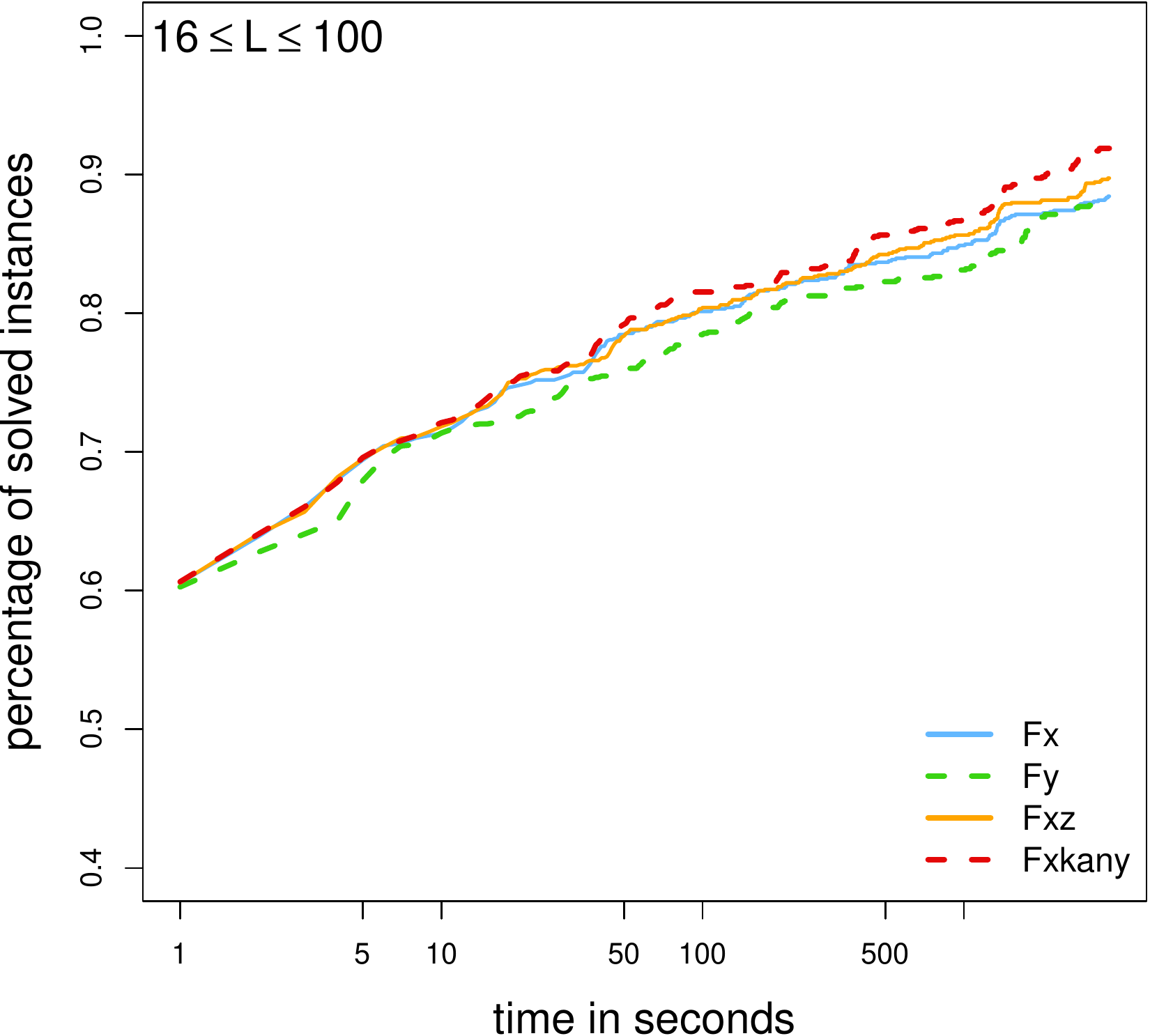}
\end{minipage}
\caption{Comparison of the four ILP variants of the ILP of Almeida and Br{\'{a}}s~\cite{AB19} for \textsc{Vertex Triangle~$2$-Club}.
}
\label{fig-results-ilp-variants}
\end{figure}

\end{document}